%% file: main.tex
\documentclass[11pt,a4paper]{article}

\usepackage[utf8]{inputenc}
\usepackage[T1]{fontenc}
\usepackage[margin=1in]{geometry}
\usepackage{amsmath,amssymb,amsthm}
\usepackage{graphicx}
\usepackage{booktabs}
\usepackage{subcaption}
\usepackage{multirow}
\usepackage{listings}
\usepackage{algorithm}
\usepackage{xcolor}
\usepackage{tikz}
\usepackage{microtype}
\usepackage[numbers,sort&compress]{natbib}
\usepackage[hidelinks,breaklinks=true]{hyperref}

\usetikzlibrary{arrows.meta,positioning,fit,backgrounds,calc}

\graphicspath{{figures/}}

\definecolor{codebg}{RGB}{247,248,250}
\definecolor{codekw}{RGB}{20,80,160}
\definecolor{codecm}{RGB}{110,120,130}
\lstdefinestyle{xg}{
  backgroundcolor=\color{codebg},
  basicstyle=\ttfamily\footnotesize,
  keywordstyle=\color{codekw}\bfseries,
  commentstyle=\color{codecm}\itshape,
  breaklines=true,
  columns=fullflexible,
  frame=single,
  rulecolor=\color{codecm!40},
  framesep=4pt,
  xleftmargin=6pt,
  showstringspaces=false,
  captionpos=b,
}
\newtheorem{proposition}{Proposition}
\newtheorem{definition}{Definition}
\theoremstyle{remark}
\newtheorem{remark}{Remark}

\newcommand{\EMB}{\textsc{embedded}}
\newcommand{\code}[1]{\texttt{\small #1}}
\newcommand{\sumw}{\ensuremath{S_w}}
\newcommand{\sumwv}{\ensuremath{S_{wv}}}

\title{\bfseries Hidden relationships in a document-derived property graph:\\
top-$k$ chunk embeddings and inverse-distance weighting\\
over a dynamically evolving ontology}

\author{Bilge Kaan Karamete\thanks{Corresponding author:
  \texttt{bkaramete@babelstreet.com}}, PhD\ \ and\ \ Hunter Casten\\[2pt]
  \small Babel Street\\
  \small 1900 Reston Metro Plaza, Suite 950, Reston, VA 20190, USA}

\date{}

\begin{document}
\maketitle

\begin{abstract}
A large language model asked to build a knowledge graph from text will assert
only what a sentence states. Over a corpus, that faithfulness is also the
method's ceiling: three paragraphs about one family can produce two disconnected
components, because no single sentence ever names the tie between them. This
article describes a purely \emph{additive} second pass that recovers those ties
without touching what the extractor found. Each document is split into paragraph
chunks and embedded once. A top-$k$ nearest-neighbour query --- originating only
in the new document's chunks, but searching every chunk in the graph --- yields
chunk pairs, which expand through a chunk-to-entity membership map into candidate
node pairs. Each pair is then scored by Shepard inverse-distance
weighting~over every chunk pair its endpoints co-occur in. Because the vectors
are $L_2$-normalised at write, $\lVert a-b\rVert^2 = 2(1-\cos)$, which makes the
weight $w = 1/(\varepsilon + 1 - \cos)$ exactly Shepard's inverse-\emph{squared}
distance and the value term a rescaled chord distance,
$v = 1 - \lVert a-b \rVert / 2$. We show that the affine alternative
$v = (1+\cos)/2$ is unusable behind a $k$-NN gate --- the gate alone confines it
to a band of width $\beta/2$, silently disabling the write threshold --- and we
derive the resulting constraint $\theta > 1-\sqrt{\beta/2}$. The per-pair
accumulators $(\sum wv, \sum w, \text{support})$ form a commutative monoid and
are persisted \emph{un-gated}, so the pass is order-independent, needs no
recomputation as the corpus grows, and lets a node pair that misses the threshold
today cross it after the next document arrives. The implementation is
engine-neutral: the same weighted \EMB{} edges go into FalkorDB, Kinetica,
ArangoDB or Neo4j through one adapter contract. We report the measurements that
shaped the design --- $768$-dimensional embeddings agreeing with a
$3072$-dimensional reference on $92\%$ of edges and $240$-dimensional on $72\%$
at under a thirteenth of the storage, and a top-$k$ formulation $25\times$ faster
than the window-function form --- and we are explicit about what is \emph{not}
measured: precision and recall against a labelled set of hidden relationships,
for which no ground truth exists in this corpus.
\end{abstract}

\medskip
\noindent\textbf{Keywords:} knowledge graph construction; property graph; link
prediction; inverse distance weighting; text embeddings; $k$-nearest neighbours;
incremental ontology; large language models.

\input{sections/01-introduction}
\input{sections/02-related-work}
\input{sections/03-pipeline}
\input{sections/04-embedding}
\input{sections/05-idw}
\input{sections/06-incremental}
\input{sections/07-implementation}
\input{sections/08-results}
\input{sections/09-scale}
\input{sections/10-limitations}
\input{sections/11-conclusion}

\section*{Acknowledgements}
The engineering record this article draws on --- every measurement quoted here
was taken against the running system --- was kept as part of the \emph{xgraph}
workbench at Babel Street.

\bibliographystyle{plainnat}
\bibliography{refs}

\end{document}

%% file: sections/01-introduction.tex
\section{Introduction}
\label{sec:intro}

Building a property graph from unstructured text is now a routine application of
large language models: a document is split into passages, each passage is handed
to a model under a constrained output schema, and the entities and relations it
returns are folded into a canonical ontology and written into a graph engine.
The method's great virtue is that it is \emph{faithful} --- every edge it
produces corresponds to something a sentence actually said, and a well-prompted
extractor will refuse to invent the rest.

That virtue is also its ceiling. Consider the three short paragraphs in
Figure~\ref{fig:extract}, drawn from the running example used throughout this
article. They concern one family. A structural extractor reads them correctly
and produces four edges --- \code{Gulgun}\,$\to$\,\code{FSI},
\code{Kaan}\,$\to$\,\code{BabelStreet}, \code{Tan}\,$\to$\,\code{Bloomberg},
\code{Tan}\,$\to$\,\code{Kaan} --- and every one of them is a sentence someone
wrote. It nonetheless leaves \emph{two disconnected components}, because no
sentence in the corpus ever states the tie between \code{Gulgun} and
\code{Tan}, or between \code{FSI} and \code{BabelStreet}. The information that
these paragraphs belong together is present in the corpus; it is simply not
present in any one sentence, and so it is invisible to a method that only
transcribes sentences.

\begin{figure}[t]
  \centering
  \includegraphics[width=\textwidth]{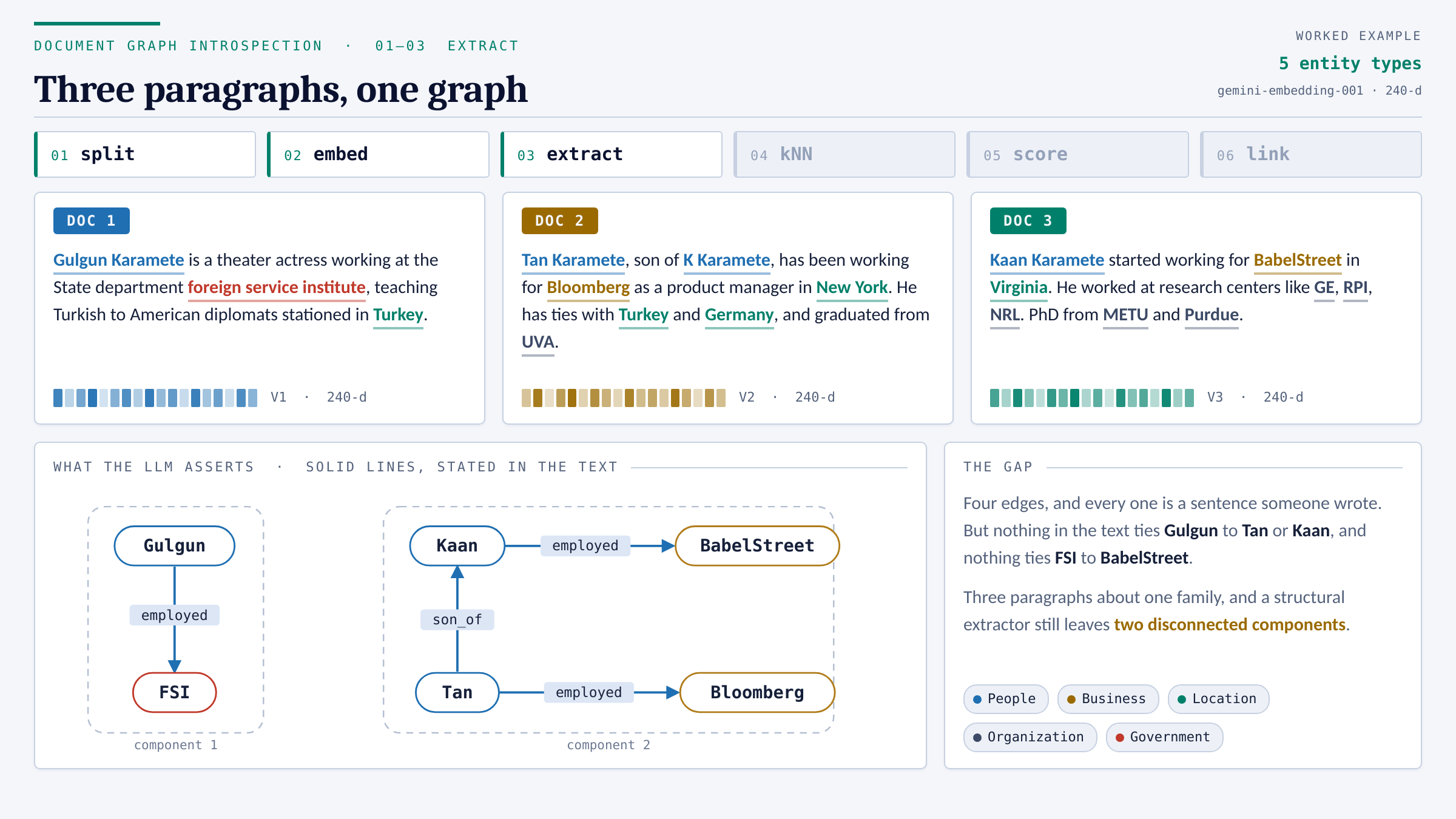}
  \caption{The running example. Three paragraphs, embedded at $240$ dimensions
  and extracted independently, yield a correct but disconnected graph: four
  stated edges, two components. Nothing in the text ties \code{Gulgun} to
  \code{Tan} or \code{FSI} to \code{BabelStreet}. The colour strips below each
  paragraph are its embedding, mean-pooled to $64$ buckets for display
  (Section~\ref{sec:impl}).}
  \label{fig:extract}
\end{figure}

The obvious repair is to add edges the text did not state, and the obvious
danger is that such edges are no longer facts. Our position is that the two can
be reconciled by keeping the derived layer strictly separate and strictly
additive: it writes exactly one new edge label, \EMB{}, carrying a numeric
weight and a support count; it can never modify or delete anything extraction
produced; and it is hidden from the natural-language query path, so that a
cosine artefact is never reported to a user as an extracted fact.

\subsection{The approach in one paragraph}

Each document is split into paragraph-sized chunks and embedded once. A top-$k$
nearest-neighbour query is issued whose \emph{query} side is restricted to the
new document's chunks but whose \emph{search} side is every chunk in the graph;
this is what links an arriving document to the corpus already ingested without
ever re-comparing the corpus against itself. The resulting chunk pairs, together
with same-chunk co-occurrence entered as a self-pair at $\cos = 1$, are expanded
through a chunk-to-entity membership map into candidate node pairs. Each node
pair is then scored by Shepard inverse-distance weighting~\citep{shepard1968}
over all chunk pairs in which its two endpoints co-occur, and pairs whose weight
clears a threshold are written back as weighted \EMB{} edges
(Figure~\ref{fig:connect}). The seven stages --- \emph{embed, persist, $k$-NN,
expand, accumulate, select, link} --- are the organising structure of this
article. That threshold is the layer's only admission rule, and it is a real
one: Figure~\ref{fig:example-theta} draws these same three paragraphs at three
settings of it, and at the third the two components are two again.

\begin{figure}[t]
  \centering
  \includegraphics[width=\textwidth]{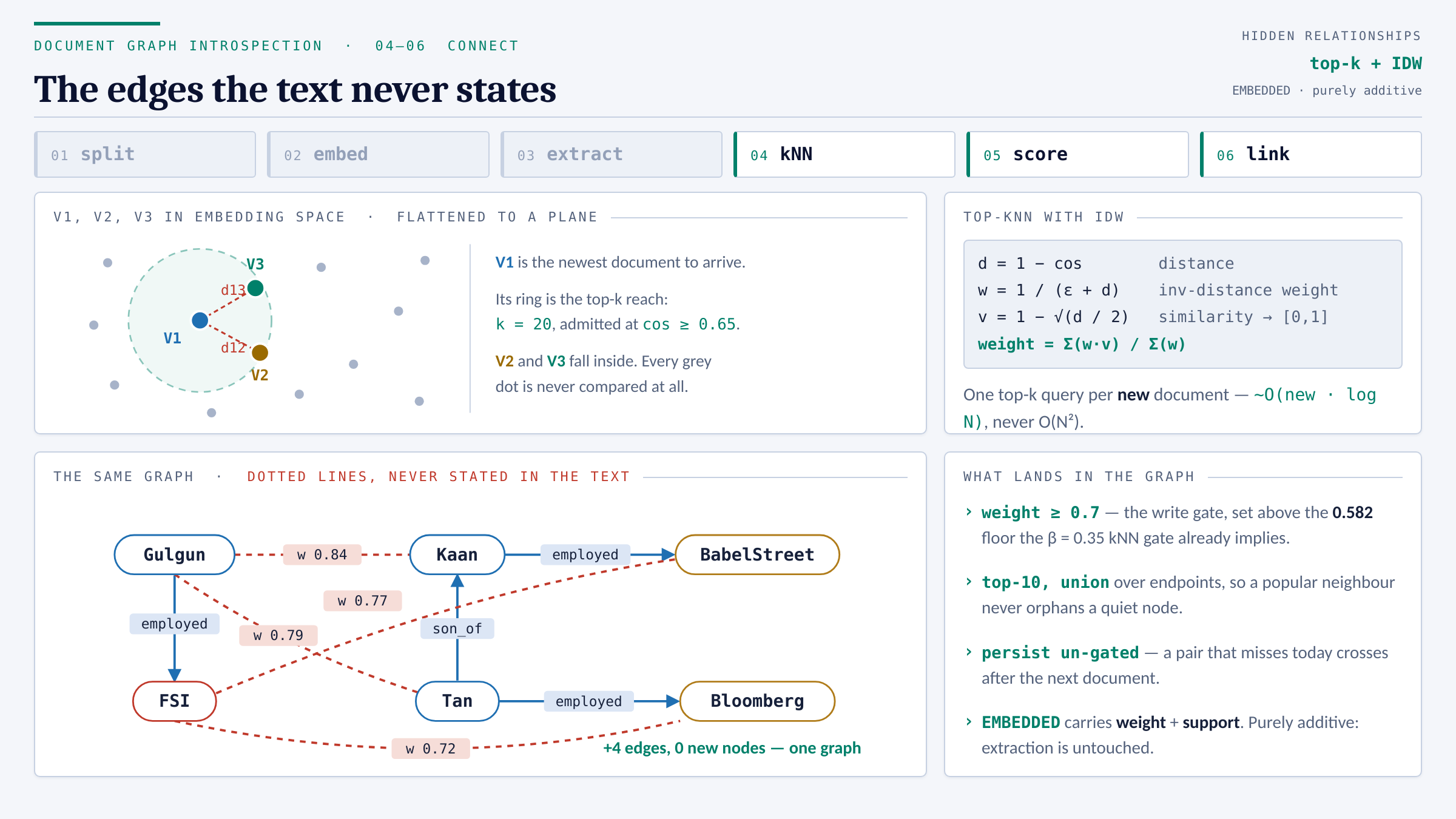}
  \caption{The same graph after the similarity pass. Four weighted \EMB{} edges
  (dotted) join the two components; no new nodes are created. The top-$k$ ring
  on the left is the reach of one document's chunk: neighbours outside it are
  never compared at all, which is what keeps the per-document cost sublinear in
  corpus size rather than quadratic.}
  \label{fig:connect}
\end{figure}

\subsection{Why inverse-distance weighting}

The scoring problem is an interpolation problem in disguise. A node pair is
observed at several \emph{sites} --- the chunk pairs its endpoints co-occur in
--- each site carrying both a similarity value and a distance, and we want one
number that respects the near observations more than the far ones. That is
precisely the setting Shepard's method was written for, and it brings two
properties that matter more here than accuracy does. First, it is a ratio of
two sums, so the accumulators are additive and a new observation is folded in
by addition alone. Second, it degrades gracefully with a single very close
observation, which is what we want when two entities share a paragraph outright.

The arithmetic simplifies considerably because the vectors are $L_2$-normalised
at write time. On unit vectors $\lVert a-b\rVert^2 = 2(1-\cos)$, so the
inverse-distance weight $w = 1/(\varepsilon + 1 - \cos)$ is \emph{exactly}
Shepard's inverse-squared-distance weight, with the factor of two cancelling in
the ratio, and the value term admits a clean geometric reading as a rescaled
chord distance. Section~\ref{sec:idw} develops this and shows why the obvious
affine alternative, $v = (1+\cos)/2$, cannot be used behind a $k$-NN gate.

\subsection{Why the ontology is dynamic, and why that constrains the design}

The ontology this pass writes into is not fixed in advance. Entity and relation
type names proposed by the extractor are folded to canonical forms, aliases are
learned as they are encountered, and facets --- dimensions \emph{of} a type,
such as \code{Engineer} on a \code{Person} --- accumulate separately from
structural types. A corpus that grows therefore changes the ontology, not merely
the instance data.

A derived layer over a moving target has to be built so that arrival order does
not matter and so that nothing already computed has to be revisited. Both fall
out of one decision: the accumulators are persisted \emph{un-gated}. The write
threshold is applied only to what is emitted as an edge, never to what is
recorded, so a pair whose evidence is currently too thin keeps accumulating and
crosses the threshold on some later document --- with no recomputation, and with
the same final weight regardless of the order the documents arrived in.
Section~\ref{sec:incremental} states this precisely: the accumulator triple is a
commutative monoid under addition, and the pass is a monoid homomorphism from
document batches into it.

\subsection{Contributions}

\begin{enumerate}
\item An \textbf{additive similarity layer} for LLM-constructed property graphs.
      It recovers relationships the text never states, as one new edge label
      carrying a weight and a support count, and it has no path by which it
      could alter what extraction found (Sections~\ref{sec:pipeline}
      and~\ref{sec:impl}).

\item A \textbf{corrected IDW formulation for cosine neighbourhoods}. Behind a
      $k$-NN gate admitting $\cos \geq 1-\beta$, the affine value term
      $v = (1+\cos)/2$ is confined to a band of width $\beta/2$ and renders any
      write threshold inoperative. We replace it with the rescaled chord distance
      $v = 1-\sqrt{(1-\cos)/2}$, derive the resulting constraint
      $\theta > 1-\sqrt{\beta/2}$, and document the near-miss repair --- Shepard
      $p=1$, which flattens the neighbour spread from $350\times$ to $19\times$
      (Section~\ref{sec:idw}).

\item An \textbf{order-independent incremental construction}. The per-pair
      accumulators are persisted before thresholding, which makes arrival order
      irrelevant and removes recomputation entirely; the per-document cost is one
      top-$k$ query against the corpus, never an all-pairs pass
      (Section~\ref{sec:incremental}).

\item A \textbf{dimension study with a storage argument}. Because
      \code{gemini-embedding-001} is a Matryoshka model~\citep{kusupati2022},
      requesting $N$ dimensions is byte-identical to truncating the full vector,
      so width is a free choice at request time. Against a $3072$-dimensional
      reference, $768$ dimensions agree on $92\%$ of emitted edges and $240$ on
      $72\%$, at $960$ bytes per chunk against $12{,}288$
      (Section~\ref{sec:embedding}).

\item An \textbf{engine-neutral realisation}. The same pass writes the same
      \EMB{} edges into FalkorDB, Kinetica, ArangoDB and Neo4j through one
      adapter contract, and declines --- by name, not by exception --- on an
      engine whose label model admits no additive element write
      (Section~\ref{sec:impl}), continuing the vendor-neutral line of
      \citet{karamete2022}.

\item \textbf{Two performance results} that were design-determining rather than
      incidental: the top-$k$ self-join as a \code{max\_by} aggregate rather than
      a window function is $25\times$ faster at $768$ dimensions over $100$k
      rows; and folding the accumulators in one multi-row \code{VALUES} upsert
      rather than per-row statements reduces a $19{,}900$-pair merge from
      $82.3$\,s to a single statement (Section~\ref{sec:results}).
\end{enumerate}

\subsection{What this article does not claim}

We do not report precision and recall for the recovered relationships. No
labelled ground truth of hidden relationships exists for this corpus, and
constructing one is a separate undertaking; the agreement figures in
Section~\ref{sec:embedding} compare one configuration against another, which is
a statement about stability under dimensionality reduction and not about
correctness. Section~\ref{sec:limitations} sets out this and the other
boundaries of the work explicitly, including the cases where the pass is
deliberately inapplicable.

\subsection{Organisation}

Section~\ref{sec:related} places the work among LLM-based graph construction,
link prediction and spatial interpolation. Section~\ref{sec:pipeline} describes
the document-to-graph pipeline the pass attaches to and the membership map it
consumes. Section~\ref{sec:embedding} covers the embedding model, the two
supported widths and the per-graph pinning that width and model choice require.
Section~\ref{sec:idw} is the core: the IDW formulation, its derivation on unit
vectors, and the two traps. Section~\ref{sec:incremental} treats the incremental
and order-independence properties, and the sense in which the ontology evolves.
Section~\ref{sec:impl} gives the implementation, including the query
formulations that made it fast. Section~\ref{sec:results} reports measurements,
Section~\ref{sec:limitations} the limitations, and Section~\ref{sec:conclusion}
concludes.

%% file: sections/02-related-work.tex
\section{Related work}
\label{sec:related}

The pass described here sits at the meeting point of four literatures that
rarely cite one another: knowledge-graph construction from text, link prediction
on graphs, spatial interpolation, and approximate nearest-neighbour search. We
take each in turn, and in each case try to say precisely what is borrowed and
what is different.

\subsection{Knowledge graphs from text, and the RAG lineage}

Knowledge-graph construction from unstructured text has a long history predating
language models~\citep{hogan2021,ji2022}, but the current practice --- prompt a
model per passage under a constrained output schema, then reconcile the results
--- is recent. Retrieval-augmented generation~\citep{lewis2020} established the
chunk-and-embed substrate that this practice runs on, and GraphRAG
\citep{edge2024} showed that imposing graph structure over those chunks improves
query-focused summarisation, using community detection over an
extraction-derived graph to build hierarchical summaries.

Our setting differs in what the derived structure is \emph{for}. GraphRAG builds
graph structure in service of generating better answers; the graph is
intermediate. Here the property graph is the product --- it is queried directly
in Cypher or GQL, browsed, and joined against relational warehouse tables --- so
a derived edge must be first-class enough to traverse and honest enough to be
distinguishable from an extracted one. That is why \EMB{} is a distinct label
with its own weight and support attributes rather than an unmarked edge, and why
it is withheld from the schema shown to the natural-language query path
(Section~\ref{sec:impl}).

The reconciliation half --- folding \code{Employer}, \code{employer} and
\code{works\_for} into one canonical relation type --- is entity and predicate
resolution under a different name. What is unusual in our setting is that
canonicalisation is not a batch step over a closed corpus but an online one:
the ontology is a live object that gains types, aliases and facets as documents
arrive, which is the condition Section~\ref{sec:incremental} is written against.

\subsection{Link prediction}

Predicting edges absent from an observed graph is a mature
field~\citep{libennowell2007,lu2011}. The dominant modern approaches are
\emph{structural}: they embed the graph itself and score pairs in that space.
Knowledge-graph embedding methods such as TransE~\citep{bordes2013} and
ComplEx~\citep{trouillon2016} learn entity and relation vectors from the
observed triples; random-walk methods such as node2vec~\citep{grover2016} learn
node representations from graph neighbourhoods; and graph neural network methods
such as SEAL~\citep{zhang2018} learn from enclosing subgraphs. Closest to the
present work, \citet{karamete2025embedding} compute node vectors \emph{ad hoc}
inside the graph engine, without a training phase, and score candidate pairs by
an inverse-distance weighting over the resulting neighbourhoods --- the
formulation this article adopts, and whose value term it revises in
Section~\ref{sec:idw}.

Our signal is not structural at all. It is the \emph{provenance} signal --- which
paragraphs an entity was mentioned in --- scored in the embedding space of the
\emph{text}, not of the graph. This has one clear disadvantage and two
advantages. The disadvantage is that we cannot predict an edge between entities
whose supporting text is dissimilar, however suggestive the graph topology may
be; structural methods can, and the two signals are complementary rather than
competing. The advantages are that no training is required, so the layer works
on a graph one document old, and that the evidence for every derived edge is a
concrete set of paragraph pairs, which makes the support count meaningful and
the edge auditable. A structural embedding gives a score; a chunk pair gives a
citation.

\subsection{Inverse-distance weighting}

Shepard's method~\citep{shepard1968} interpolates a value at an unsampled
location as a weighted mean of observed values, weighting each observation by an
inverse power of its distance. It is the standard workhorse of spatial
interpolation and is normally discussed with $p=2$ in two or three geometric
dimensions.

We use it in a high-dimensional cosine space, and the transplant is not
cosmetic. Three points deserve emphasis. First, on $L_2$-normalised vectors the
relation $\lVert a-b\rVert^2 = 2(1-\cos)$ makes the natural cosine-based weight
$1/(1-\cos)$ coincide exactly with Shepard $p=2$, so the choice of exponent is
not a free parameter here but a consequence of normalisation
(Section~\ref{sec:idw}). Second, the classical method interpolates a value that
is given independently of the distances, whereas our value term is a
\emph{function} of the same distance that produces the weight --- which is why
the choice of that function interacts with the neighbourhood gate in the way
Section~\ref{sec:idw} analyses, an interaction with no counterpart in the
spatial setting. Third, the singularity at zero distance, normally an annoyance
handled by an $\varepsilon$ or by returning the observed value exactly, is here
a deliberate feature: two entities in the same paragraph enter at $\cos = 1$ and
dominate, because sharing a paragraph is much stronger evidence than being
mentioned in similar paragraphs.

\subsection{Text embeddings and nearest-neighbour search}

Dense sentence and passage embeddings~\citep{reimers2019}, benchmarked at scale
by MTEB~\citep{muennighoff2023}, are the input to the pass. We use
\code{gemini-embedding-001} \citep{lee2025,geminiteam2023}, whose relevant
property for this work is that it is trained with Matryoshka representation
learning~\citep{kusupati2022}: prefixes of the full vector are themselves valid
embeddings, so requesting $N$ dimensions is byte-identical to truncating to the
first $N$ components. Width therefore becomes a storage decision made at
request time rather than a modelling decision made at training time, which is
what allows the two-width design of Section~\ref{sec:embedding}.

For neighbour search, HNSW~\citep{malkov2020} and the Faiss
library~\citep{johnson2019,douze2024} are the standard tools at scale. We use
neither as the default, and reach for the first only past a corpus-size
threshold. The query is by default an \emph{exact} top-$k$ self-join inside the
analytical store~\citep{raasveldt2019} (Algorithm~\ref{alg:master}, step~8),
which is affordable because only one arriving document's chunks are on the query
side; past roughly $10^5$ stored vectors, where the scan stops being cheaper than
the index, the store's own HNSW index supplies candidates and the exact metric
still supplies the scores. That the approximation is confined to candidate
selection is the whole of what makes it acceptable, and
Section~\ref{subsec:cost} measures the residual at the level that matters --- the
edges finally written, not the neighbours retrieved.
Remark~\ref{rem:notasearch} draws the distinction that survives either route.

The other classical way to turn a vector set into candidate pairs is to
\emph{partition} it rather than to query it --- $k$-means and its spherical
variant on unit vectors~\citep{dhillon2001}, the self-organising map
\citep{kohonen1990}, or any of the clustering family surveyed
by~\citet{jain2010}. We measured that alternative rather than dismissing it,
both as a scoring rule and as an index, and neither form is adopted. The negative
result generalises further than we expected: partitioning fails here on the raw
coordinates, on a wavelet basis, and on the centred and re-isotropised vectors
alike, which is what eventually motivated a \emph{graph} index rather than a
better partition. Section~\ref{subsec:cost} gives the measurements.

\subsection{Property graphs and engine neutrality}

The graph is expressed in the property-graph model and queried in Cypher
\citep{francis2018}, or in GQL and SQL/PGQ~\citep{deutsch2022}, whose
foundations are surveyed by \citet{angles2017}; \citet{robinson2015} is the
standard practical treatment. A recurring obstacle to writing a derived layer once and running it
everywhere is that engines disagree on where a label \emph{lives}: FalkorDB
\citep{cailliau2019}, Neo4j and ArangoDB treat a label as a value on the
element, so a new edge label is an ordinary write, whereas warehouse-backed
property graphs in the Dremel lineage~\citep{melnik2010} treat a label as part
of the schema, so introducing one is a DDL change and there is no additive
element write at all. Our adapter contract makes that difference explicit
rather than papering over it: the pass names the engine and declines, instead of
failing obscurely (Section~\ref{sec:impl}).

Where a label \emph{is} a value, what it costs to filter on one is a data
structure question rather than a query-language one, and
\citet{karamete2023labels} treat it as such at billion-edge scale. That cost is
not incidental here. A pass that adds one dominant edge label makes label-scoped
traversal the primary way to read the graph at all --- separating the derived
layer from the extracted one is a label predicate, and
Section~\ref{subsec:derived} shows it is the difference between a legible graph
and an opaque one.

The vendor-neutral framing, and the concern with keeping graph traversal and
relational analytics in one system rather than two, continues the line of
\citet{karamete2022}.

%% file: sections/03-pipeline.tex
\section{From documents to a property graph}
\label{sec:pipeline}

This section describes the extraction pipeline the similarity pass attaches to.
It is deliberately conventional; what matters for the rest of the article is the
three artefacts it leaves behind --- the chunk sequence, the canonical ontology,
and the chunk-to-entity membership map --- and the additive write seam through
which anything downstream must go.

\subsection{Chunking}

A document is split on paragraph boundaries. Paragraphs, rather than a fixed
token window, are used because they are the unit the extractor is prompted
against and therefore the unit whose membership map is meaningful: an entity is
recorded as a member of the paragraph the model saw when it named that entity.

Two details are load-bearing. The number of chunks per document is capped, and
the cap is \emph{reported} rather than applied silently --- the pipeline returns
a truncation flag alongside the chunk list, so a document that lost its tail is
visible as such rather than appearing to have been processed in full. And chunk
extraction calls are issued concurrently but their results are stored by index,
so the downstream merge is order-deterministic even though completion order is
not.

\subsection{Extraction}

Each chunk is sent to a language model under a constrained JSON output schema
requesting a list of entities and a list of relations. The schema enforces one
invariant that later stages depend on: a relation's \code{source} and
\code{target} must each equal the \code{name} of an entity returned \emph{from
that same chunk}. This keeps every asserted edge local to a passage --- an edge
is a claim about one paragraph, and its provenance is unambiguous --- and it is
what makes the membership map of Section~\ref{subsec:membership} well defined.

Extraction quality is treated as a build-time concern rather than a query-time
one, so extraction and the folding step below run on a stronger model tier than
the interactive question-answering path does.

\subsection{Folding into a canonical ontology}
\label{subsec:folding}

A language model asked for an entity type will return \code{Company} on one
paragraph and \code{Organization} on the next. The folding stage rewrites
proposed type labels to canonical forms and learns aliases as it goes, along two
axes --- \code{EntityType} and \code{RelationType} --- with the decision itself
delegated to the model when a proposed name is not already a known alias.

The ontology that results is a live object, and it grows in three distinct ways
as documents arrive: new canonical types appear, new aliases are attached to
existing canonicals, and \emph{facets} accumulate. A facet is a dimension of a
type rather than a type in its own right --- \code{Engineer} classifies a
\code{Person} --- and it is kept separate from the structural type on purpose. A
node's label vector is $[\textit{structural type}, \textit{facets}\ldots]$, and
folding a facet into the structural position would fan every relation edge out
over the label combinations and push the per-type node shares past $100\%$. The
distinction matters to the similarity pass only indirectly, but it matters to
the claim that the ontology \emph{evolves}: the schema a query sees after ten
documents is not the schema it saw after one.

\subsection{The membership map}
\label{subsec:membership}

The artefact the similarity pass actually consumes is the map
\[
  \mu : (\textit{doc\_uri}, \textit{chunk\_index}) \;\longmapsto\;
  \{\,\textit{node id}\,\},
\]
recording which entities were extracted from which chunk. It is written at
extraction time, one row per (chunk, entity) pair, and it is the only bridge
between the embedding space --- which knows about paragraphs --- and the graph,
which knows about entities. Everything in Sections~\ref{sec:idw}
and~\ref{sec:incremental} is an operation on chunk pairs pushed through $\mu$.

\subsection{The additive seam}

Extraction writes into the graph through an adapter method that ingests a list
of nodes and a list of edges and merges them idempotently, keyed on a
deterministic identifier. The similarity pass uses \emph{the same} method, with
an empty node list: every endpoint of an \EMB{} edge is by construction a node
the graph already has, ingested by this run's extraction or an earlier one.

This is the structural reason the derived layer cannot corrupt the extracted
one. The pass has no delete path, no property-overwrite path and no node-creation
path; the only mutation available to it is the addition of edges carrying a
label that extraction never emits. A failure anywhere in the pass therefore
leaves a graph that is smaller than intended but never wrong.

\subsection{The seven stages}

Figure~\ref{fig:stages} shows where the similarity pass sits and what it does.
The remaining sections take the stages in order: \emph{embed} and \emph{persist}
in Section~\ref{sec:embedding}; \emph{$k$-NN}, \emph{expand} and
\emph{accumulate} in Section~\ref{sec:idw}; \emph{select} and \emph{link} in
Sections~\ref{sec:incremental} and~\ref{sec:impl}.

Algorithm~\ref{alg:master} is the same thing written out end to end, in words,
for the arrival of one document. Every misreading of the method we have met comes
either from the granularity --- what is compared against what --- or from
stopping where candidate pairs exist, three stages before any edge is written.

\begin{figure}[t]
\centering
\begin{tikzpicture}[
  node distance=3mm,
  box/.style={draw, rounded corners=2pt, minimum height=8mm, align=center,
              font=\footnotesize\ttfamily, inner xsep=4pt},
  ext/.style={box, fill=black!5},
  sim/.style={box, fill=blue!8, draw=blue!45},
  ar/.style={-{Stealth[length=2mm]}, draw=black!60}
]
\node[ext] (doc)   {document};
\node[ext, right=of doc]  (chunk) {chunk};
\node[ext, right=of chunk](extr)  {extract};
\node[ext, right=of extr] (fold)  {fold};

\node[sim, below=9mm of doc]    (embed)  {embed};
\node[sim, right=of embed]      (pers)   {persist};
\node[sim, right=of pers]       (knn)    {$k$-NN};
\node[sim, right=of knn]        (exp)    {expand};
\node[sim, right=of exp]        (acc)    {accumulate};
\node[sim, right=of acc]        (sel)    {select};
\node[sim, right=of sel]        (link)   {link};

\draw[ar] (doc) -- (chunk);
\draw[ar] (chunk) -- (extr);
\draw[ar] (extr) -- (fold);
\draw[ar] (embed) -- (pers);
\draw[ar] (pers) -- (knn);
\draw[ar] (knn) -- (exp);
\draw[ar] (exp) -- (acc);
\draw[ar] (acc) -- (sel);
\draw[ar] (sel) -- (link);

\draw[ar] (chunk.south) to[out=-90,in=90] (embed.north);
\draw[ar, dashed] (fold.south) to[out=-90,in=90] node[right, font=\scriptsize,
      pos=0.55] {$\mu$} (exp.north);
\draw[ar, dashed] (link.north) to[out=45,in=-45]
      node[right, font=\scriptsize, pos=0.5] {\ \EMB{}} ($(fold.east)+(0,-2mm)$);

\node[font=\scriptsize\itshape, left=1mm of doc]   {extraction};
\node[font=\scriptsize\itshape, left=1mm of embed] {similarity};
\end{tikzpicture}
\caption{The similarity pass (blue) runs after extraction and writes back
through the same additive seam. It consumes the chunk sequence and the
membership map $\mu$, and emits one edge label. It has no path by which it can
alter an extracted node or edge.}
\label{fig:stages}
\end{figure}

\begin{algorithm}[p]
\caption{What happens when one document arrives. Steps 1--5 are extraction and
are conventional; steps 6--14 are the similarity pass. Three lines carry most of
the design. Step~8 is an \emph{exhaustive} comparison against every stored
paragraph, not a search through them. Step~11 is the inverse-distance weighted
average --- the scoring rule of Section~\ref{sec:idw} --- and it is what turns a
set of candidate pairs into a number. Step~12 saves that number \emph{un-gated},
which is why step~13 can promote a pair that this document never mentions.}
\label{alg:master}
\small
\setlength{\tabcolsep}{4pt}
\begin{tabular}{@{}r@{\ }p{0.655\textwidth}@{\hspace{2em}}l@{}}
\toprule
\multicolumn{3}{@{}p{0.96\textwidth}@{}}{%
\textbf{Input:} one new document.\quad
\textbf{Existing state:} the graph, the canonical ontology, the table of
paragraph vectors, and one running total per entity pair.} \\
\midrule
\multicolumn{3}{@{}l}{\itshape Extraction} \\
 1 & Split the document into paragraphs. & {\color{codecm}\code{chunk}} \\
 2 & Ask the language model, once per paragraph, which entities and which
     relations \emph{that paragraph} states. & {\color{codecm}\code{extract}} \\
 3 & Fold the proposed type names into the canonical ontology, so that
     \code{Company} and \code{Organization} become one type
     (Section~\ref{subsec:folding}). & {\color{codecm}\code{fold}} \\
 4 & Record which entities were named in which \emph{paragraph}. This map is the
     only bridge between the text and the graph
     (Section~\ref{subsec:membership}). & {\color{codecm}\code{$\mu$}} \\
 5 & Write the entities and relations into the graph. Everything above has now
     landed; nothing below can alter it (Section~\ref{sec:pipeline}). & {\color{codecm}\code{ingest}} \\
\addlinespace[2pt]
\midrule
\multicolumn{3}{@{}l}{\itshape Similarity pass} \\
 6 & Turn each \emph{paragraph} --- not the document --- into one vector of fixed
     width, $240$ or $768$ numbers, scaled to unit length
     (Section~\ref{sec:embedding}). & {\color{codecm}\code{embed}} \\
 7 & Store those vectors alongside every paragraph vector already kept for this
     graph. & {\color{codecm}\code{persist}} \\
 8 & For each new paragraph, measure its similarity to \emph{every} paragraph
     vector in the store, and keep the $20$ most similar that clear the gate
     $\cos \geq 1-\beta$. This is an exhaustive comparison, not a search: there
     is no starting point and no walking from one neighbour to the next
     (Remark~\ref{rem:notasearch}). & {\color{codecm}\code{$k$-NN}} \\
 9 & Also pair each new paragraph with \emph{itself}, at perfect similarity. Two
     entities named in the same paragraph are the strongest evidence there is,
     and this is how that evidence enters (Section~\ref{sec:idw}). & {\color{codecm}\code{self-pair}} \\
10 & Turn every paragraph pair into entity pairs: one entity from each side,
     counted at most once per paragraph pair
     (Algorithm~\ref{alg:accumulate}). & {\color{codecm}\code{expand}} \\
11 & Score each entity pair by a \textbf{weighted average, in inverse-distance
     weighting fashion}. Each paragraph pair supporting it contributes a
     \emph{value} that falls as the two paragraphs grow apart, and contributes it
     with a \emph{weight} of one over that same distance --- so near evidence
     dominates far evidence. The pair's score is the weighted mean of those
     values, $\sumwv/\sumw$ of \eqref{eq:weight}. & {\color{codecm}\code{IDW}} \\
12 & Add this document's contribution into the running total held for each
     entity pair, and save the totals \textbf{without filtering any of them}
     (Section~\ref{sec:incremental}). & {\color{codecm}\code{accumulate}} \\
13 & Re-read the totals of every pair touched --- their whole history across the
     corpus, not just this document --- and keep a pair if its weighted average
     is at least $\theta$ \emph{and} it is in the top $n$ of either endpoint. & {\color{codecm}\code{select}} \\
14 & Write the survivors as \EMB{} edges carrying their weight and their support
     count, through the same additive seam extraction used. & {\color{codecm}\code{link}} \\
\bottomrule
\end{tabular}
\end{algorithm}

\begin{remark}[The neighbour step is a comparison, not a search]
\label{rem:notasearch}
Readers arriving from the vector-database literature reliably import a mental
model this pass does not use. The difference is worth stating plainly, because
all three parts of it change what the method guarantees.

\emph{The unit is the paragraph, not the document.} A document is never embedded,
never has neighbours, and never appears in the membership map except as half of a
key. Step~8 finds the twenty most similar \emph{paragraphs} to each arriving
paragraph, and those may come from twenty different documents or from one.

\emph{The search is exhaustive by default.} The tempting alternative --- pick an
entry point, move to whichever neighbour is closer to the query, repeat until no
neighbour improves --- is the greedy descent of the navigable-small-world family
\citep{malkov2020}, and it is what an approximate index does. Below roughly
$10^5$ stored vectors it is not what runs here, and it would not pay if it did.
Step~8 is a single self-join over the whole vector table with a top-$k$
aggregate: every stored vector is scored, and the twenty returned are the true
twenty. The reason this is affordable is asymmetry, not cleverness --- only the
arriving document's paragraphs are on the query side, typically tens of vectors,
so the cost is linear in corpus size with a very small constant
(Section~\ref{subsec:maxby}).

Past that threshold an HNSW index takes over the \emph{candidate} half of the
step, and the guarantee weakens in exactly one place. The twenty returned may no
longer be the true twenty; every pair that is returned is still scored by the
same exact inner product, so no weight is ever an estimate. What that costs was
measured on the written edges rather than on the retrieved neighbours, and the
distinction is not cosmetic: the neighbours an approximate search drops sit in
the far tail of the top-$k$, which the threshold and the per-endpoint cap of
step~12 discard in any case. Section~\ref{subsec:cost} reports
$99.6$--$100\%$ agreement on the edge set where chunk-level recall reads
$98.7$--$99.8\%$. Section~\ref{sec:limitations} states what remains.

\emph{Candidate pairs are not edges.} It is natural to read steps 8--10 as the
method and to expect an edge for each entity pair spanning a neighbouring
paragraph. Three stages separate the two. The pair must be \emph{scored} by the
weighted average of step~11; the score is \emph{accumulated} into a total that
spans every document ever ingested, not just this one; and only then is it
\emph{selected} against a threshold and a per-endpoint cap. A pair may therefore
appear in many documents and never earn an edge, or earn one long after the
document that first suggested it.
\end{remark}

\begin{remark}[A degenerate input, excluded upstream]
The pipeline also supports a whole-document extraction mode in which the entire
document is a single chunk. That mode is incompatible with the similarity pass
in a specific and total way: with one chunk, every entity in the document
co-occurs with every other at $\cos = 1$, and the pass emits an all-pairs clique
rather than a similarity signal. The system resolves this by \emph{forcing}
chunked extraction whenever the similarity pass is requested, and reporting that
it did so, rather than by having the pass refuse to run. The reasoning is that
the more specific instruction should win: a user who asks for similarity edges
should receive them, not an explanation of why a setting on another panel
prevented it.
\end{remark}

%% file: sections/04-embedding.tex
\section{Embedding: two widths, pinned per graph}
\label{sec:embedding}

\subsection{Normalisation is a precondition, not a preference}

Chunk vectors are $L_2$-normalised at write time. This is not a stylistic choice
about how to store floating-point data; two later results depend on it.

The first is computational. On unit vectors the inner product \emph{is} the
cosine, so the $k$-NN self-join of Section~\ref{sec:impl} can use the store's
raw \code{array\_inner\_product} primitive with no normalising divisions in the
inner loop --- and that inner loop runs over every chunk in the graph for every
chunk of an arriving document.

The second is analytical. For unit vectors $a$ and $b$,
\begin{equation}
  \lVert a - b \rVert^2 \;=\; \lVert a \rVert^2 + \lVert b \rVert^2 - 2\,a\cdot b
  \;=\; 2\,(1 - \cos),
  \label{eq:chord}
\end{equation}
which is what turns the cosine-based weight of Section~\ref{sec:idw} into a
recognisable Shepard weight and gives the value term a geometric reading. Without
normalisation neither identity holds and the IDW formulation loses its
justification.

\subsection{Matryoshka widths, and why exactly two}

The embedder is \code{gemini-embedding-001} \citep{lee2025}, a Matryoshka
model~\citep{kusupati2022}. The property that matters is that requesting an
output of $N$ dimensions returns exactly the first $N$ components of the full
vector: width is a request-time parameter, not a retraining decision, and a
narrower vector is a prefix of a wider one rather than a different embedding.

The system exposes two widths, $240$ and $768$, and nothing between them. Two
constraints produce that answer from opposite directions.

\paragraph{Storage forces discreteness.} The analytical store types a vector
column as a fixed-width array, so each supported width requires its own physical
table (\code{xgraph\_chunk\_vectors\_240} and \code{xgraph\_chunk\_vectors\_768}).
Offering a continuous choice would mean a table per width in use. The set of
widths is therefore small by construction, and the only question is which ones.

\paragraph{Measurement picks the members.} Table~\ref{tab:dims} gives the
agreement study: for each candidate width, the fraction of emitted \EMB{} edges
that match those produced by the full $3072$-dimensional reference on the same
corpus and the same parameters. At $768$ the layer is substantially the same
layer --- $92\%$ of edges agree --- at a quarter of the storage. At $240$ it is
recognisably the same layer, $72\%$ agreeing, at under a third of $768$'s
storage and under a thirteenth of the reference's. $240$ is the shipped default
and $768$ the fidelity option, because in the operational setting the pass is
used in --- a graph that keeps growing, where every chunk ever ingested stays
resident and is scanned by every subsequent document --- the vector table is the
part that grows without bound, and a $28\%$ disagreement on a layer that is
explicitly advisory costs less than a $3.2\times$ store.

\begin{table}[t]
\centering
\caption{Agreement of the emitted \EMB{} edge set with the full
$3072$-dimensional reference, and the storage each width implies per chunk at
$4$-byte floats. The two shipped widths are the ones that bracket the useful
trade-off; nothing between them is offered because each width needs its own
fixed-width physical table.}
\label{tab:dims}
\begin{tabular}{lrrrl}
\toprule
Width & Edge agreement & Bytes/chunk & Relative storage & Status \\
\midrule
$3072$ & $100\%$ (reference) & $12{,}288$ & $1.00\times$  & not offered \\
$768$  & $92\%$              & $3{,}072$  & $0.250\times$ & fidelity option \\
$240$  & $72\%$              & $960$      & $0.078\times$ & default \\
\bottomrule
\end{tabular}
\end{table}

\begin{remark}
Table~\ref{tab:dims} measures \emph{agreement between configurations}, not
accuracy. A width that disagrees with the reference on $28\%$ of edges is not
thereby wrong on $28\%$ of edges; it is differently right or differently wrong,
and which of those it is cannot be determined without a labelled set that does
not exist here. What the table does support is the operational claim it was
taken for: at $768$ the compression is nearly free, and at $240$ it is a
deliberate, quantified trade rather than a guess.
\end{remark}

\subsection{Pinning width and model per graph}

Both the width and the embedder identity are recorded against the graph on first
use, and a later request that conflicts with the recorded value is
\emph{refused} rather than silently coerced.

The reason is that a graph's vectors share one table and one geometry. Mixing
widths in a fixed-width table is impossible; mixing embedders in one table is
worse, because it is possible. Two models' vectors of the same length will
happily produce an inner product, and that number is meaningless --- there is no
shared space in which the angle between them means anything. A cosine that is
merely meaningless rather than erroneous is exactly the kind of defect that
never announces itself, so the resolution functions for width and for model both
raise on conflict.

The rest of the surface is forgiving by contrast. Every numeric option --- $k$,
$\beta$, the write threshold, the per-endpoint cap --- is \emph{clamped} to its
valid range rather than rejected, on the principle that a malformed number in an
options panel should not fail an extraction that otherwise succeeded. An
unrecognised embedder identity likewise falls back to the graph's pinned value,
since an unknown identifier is usually a stale browser tab rather than a
decision. Only a genuine conflict raises.

\subsection{Persisting a vector, and what is kept beside it}

Each chunk is stored as \code{(graph, doc\_uri, chunk\_index, vec, char\_len,
preview)}, upserted so that re-processing a document is idempotent, together
with the membership rows of Section~\ref{subsec:membership}.

The \code{preview} field --- a whitespace-collapsed $240$-character head of the
paragraph --- exists so that the stored layer can be inspected as paragraphs
rather than as anonymous vectors, and its size is the whole point: long enough to
recognise a paragraph, short enough that the store remains a vector store and
does not become a second copy of the corpus. Vectors may also be discarded after
the pass completes if the caller does not intend to extend the graph further;
the accumulators of Section~\ref{sec:incremental} survive independently, so
discarding vectors costs the ability to add \emph{new} documents cheaply, not
the weights already derived.

%% file: sections/05-idw.tex
\section{Scoring: inverse-distance weighting over cosine neighbourhoods}
\label{sec:idw}

This section is the core of the article. We set out the formulation, derive it
on unit vectors, prove the constraint it places on the write threshold, and
document two plausible variants that do not work --- one of which shipped first
and had to be replaced.

\subsection{Definitions}

\begin{definition}[Chunk pair]
A \emph{chunk pair} is a triple $(c_i, c_j, s)$ where $c_i$ and $c_j$ are chunk
keys and $s = \cos(x_i, x_j)$ is the cosine of their embeddings. Two sources
produce chunk pairs: the top-$k$ neighbour query, and same-chunk co-occurrence,
which enters as the self-pair $(c, c, 1)$.
\end{definition}

\begin{definition}[Node pair evidence]
Let $\mu$ be the membership map of Section~\ref{subsec:membership}. A chunk pair
$(c_i, c_j, s)$ is \emph{evidence for} the unordered node pair $\{a,b\}$ when
$a \in \mu(c_i)$ and $b \in \mu(c_j)$ (or symmetrically), with $a \neq b$. Node
pairs are canonicalised by lexicographic order, so $\{a,b\}$ is stored once.
\end{definition}

\subsection{The formulation}

For a chunk pair at cosine $s$, write the distance, weight and value as
\begin{equation}
  d \;=\; 1 - s,
  \qquad
  w \;=\; \frac{1}{\varepsilon + d},
  \qquad
  v \;=\; 1 - \sqrt{\tfrac{d}{2}},
  \label{eq:delta}
\end{equation}
with $\varepsilon = 10^{-6}$ and $s$ clamped to $[-1, 1]$. The weight of a node
pair is the inverse-distance weighted mean of the value over all its evidence:
\begin{equation}
  \mathrm{weight}(\{a,b\})
  \;=\;
  \frac{\sum_{e \in E(a,b)} w_e\, v_e}{\sum_{e \in E(a,b)} w_e}
  \;=\;
  \frac{\sumwv}{\sumw},
  \label{eq:weight}
\end{equation}
where $E(a,b)$ is the multiset of chunk pairs that are evidence for $\{a,b\}$.
The implementation stores the triple $(\sumwv, \sumw, \textit{support})$ per node
pair, never the individual terms; Section~\ref{sec:incremental} shows why that
representation is the whole incremental story.

\subsection{Why this is Shepard's method exactly}

\begin{proposition}[Shepard $p = 2$]
On $L_2$-normalised vectors, the weight $w = 1/(\varepsilon + d)$ of
\eqref{eq:delta} is Shepard inverse-squared-distance weighting, and the constant
relating them cancels in \eqref{eq:weight}.
\end{proposition}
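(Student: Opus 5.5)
The argument rests on the chord identity \eqref{eq:chord}. First I will write down Shepard's weight with exponent $p$ explicitly: an observation at Euclidean distance $r$ receives weight $r^{-p}$, and the interpolant is $\sum r_e^{-p} v_e / \sum r_e^{-p}$. For two chunks with unit embeddings $x_i, x_j$, the relevant Euclidean distance is $r = \lVert x_i - x_j\rVert$. By \eqref{eq:chord}, $r^2 = 2(1-s) = 2d$, where $d = 1-s$ is the distance of \eqref{eq:delta}. Setting $p=2$ gives the Shepard weight $r^{-2} = 1/(2d)$, so $d$ is exactly half the squared chord distance. The cosine-based weight $1/d$ is therefore twice the Shepard weight, and the two differ by a constant.

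Next I will show that this constant cancels. In \eqref{eq:weight}, multiplying every $w_e$ by the same positive constant $\lambda$ multiplies both $\sumwv$ and $\sumw$ by $\lambda$, so the ratio is unchanged. Taking $\lambda = 2$, i.e.\ replacing $1/(2d_e)$ by $1/d_e$, shows that the node-pair weight computed with $w_e = 1/d_e$ equals the Shepard $p=2$ interpolant over the same evidence multiset $E(a,b)$. The persisted triple $(\sumwv,\sumw,\textit{support})$ does carry the constant, but it is uniform across all pairs. Thresholding happens only on the ratio, so the constant never reaches any decision.

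The only point needing care is the regulariser $\varepsilon$. I will handle it by placing the regularisation inside the squared distance: the Shepard weight with a smoothed squared distance, $1/(r^2 + 2\varepsilon)$, equals $\tfrac12 \cdot 1/(\varepsilon + d)$ exactly. So $w = 1/(\varepsilon+d)$ is precisely $2\times$ the Shepard $p=2$ weight with the standard additive regularisation of $r^2$, and the cancellation argument above applies verbatim. This is also the reading that gives the self-pair at $s=1$ a finite weight of $1/\varepsilon$, consistent with the paper's treatment of the zero-distance singularity.

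I expect this $\varepsilon$ bookkeeping to be the main subtlety, i.e.\ saying in what sense "exactly" holds once $\varepsilon>0$. The remaining steps are routine. Clamping $s$ to $[-1,1]$ ensures $d\in[0,2]$, so $r=\sqrt{2d}$ is real and all weights are positive. This positivity is needed for the ratio in \eqref{eq:weight} to be a genuine weighted mean.
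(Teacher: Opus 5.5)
Your proof is correct and follows the paper's argument. Both use \eqref{eq:chord} to write $d = \lVert a-b\rVert^2/2$, obtain $w = 2/(2\varepsilon + \lVert a-b\rVert^2)$, twice the Shepard $p=2$ weight with its squared distance regularised by $2\varepsilon$ (the paper's ``regularised at the origin''), and then observe that this uniform constant cancels in the ratio \eqref{eq:weight}.
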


\begin{proof}
By \eqref{eq:chord}, $\lVert a-b\rVert^2 = 2(1-\cos) = 2d$, so
$d = \lVert a-b\rVert^2/2$ and
\[
  w \;=\; \frac{1}{\varepsilon + \tfrac{1}{2}\lVert a-b\rVert^2}
    \;=\; \frac{2}{2\varepsilon + \lVert a-b\rVert^2},
\]
which is Shepard's $\lVert a-b\rVert^{-p}$ with $p=2$, regularised at the
origin, times the constant $2$. Since \eqref{eq:weight} is a ratio in which
every weight carries that same constant, it cancels.
\end{proof}

The value term has an equally direct reading. Substituting $d = \lVert
a-b\rVert^2/2$ into \eqref{eq:delta},
\begin{equation}
  v \;=\; 1 - \sqrt{\frac{\lVert a-b\rVert^2}{4}}
    \;=\; 1 - \frac{\lVert a-b \rVert}{2},
  \label{eq:vchord}
\end{equation}
the chord distance rescaled to $[0,1]$: unit vectors are at most $2$ apart, so
$v = 1$ for identical direction, $v = \tfrac12$ for orthogonal, $v = 0$ for
opposed. Unlike the alternative considered next, this holds \emph{without}
reference to any gate.

\subsection{Why the affine value term fails behind a gate}

The neighbourhood is gated: the $k$-NN query admits only neighbours with
$\cos \geq 1 - \beta$, with $\beta = 0.35$ by default. The natural first choice
for a value term --- and the one this design inherited, from the ad-hoc node
vector embedding of \citet{karamete2025embedding} --- is the affine rescaling
$v_{\mathrm{aff}} = (1+\cos)/2 = 1 - d/2$. It maps $[-1,1]$ to $[0,1]$ and is
monotone, which appears sufficient.

It is not, and the gate is why.

\begin{proposition}[The affine term collapses onto the gate]
\label{prop:affine}
Under the gate $d \leq \beta$, the affine value satisfies
$v_{\mathrm{aff}} \in [\,1 - \beta/2,\; 1\,]$, an interval of width $\beta/2$.
At $\beta = 0.35$ this is $[0.825, 1]$, of width $0.175$.
\end{proposition}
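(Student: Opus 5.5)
The plan is to write $v_{\mathrm{aff}}$ as a function of the distance $d = 1-s$ of \eqref{eq:delta} and push the gate through that function. Since $v_{\mathrm{aff}} = (1+s)/2$ and $s = 1-d$, we have $v_{\mathrm{aff}} = 1 - d/2$. This is an affine map of $d$ with slope $-1/2$, so it is strictly decreasing. Its image of any interval of distances is therefore the interval between the images of the endpoints.

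Next I determine the range of $d$ that actually occurs behind the gate. The gate admits exactly the chunk pairs with $s \geq 1-\beta$, that is, $d \leq \beta$. Because $s$ is clamped to $[-1,1]$, we also have $d \geq 0$. The self-pair $(c,c,1)$ of the chunk-pair definition sits at $d = 0$ and satisfies the gate trivially. So every admissible pair has $d \in [0,\beta]$, and every value in that interval is allowed in principle.

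Applying the decreasing map gives $v_{\mathrm{aff}} \in [1-\beta/2,\, 1]$. The endpoint $d=0$ gives $v_{\mathrm{aff}} = 1$, and $d=\beta$ gives $v_{\mathrm{aff}} = 1-\beta/2$. The width is $\beta/2$, and substituting $\beta = 0.35$ yields $[0.825, 1]$ with width $0.175$.

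I would close with the consequence the section needs. By \eqref{eq:weight} the node-pair weight is a convex combination of per-evidence values, since the weights are positive and normalised by $\sumw$. The node-pair weight therefore also lies in $[1-\beta/2, 1]$. Any threshold $\theta \leq 1-\beta/2$ consequently admits every candidate pair, which is the disabling effect claimed.

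There is no real obstacle here: monotonicity and the endpoint bookkeeping do all the work. The only point needing care is confirming that the lower end $d \geq 0$ is guaranteed by the clamping rather than assumed, and that the interval is tight at both ends, attained at the self-pair and at a neighbour exactly on the gate.
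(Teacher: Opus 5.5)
Your proof is correct and follows the paper's own argument: rewrite $v_{\mathrm{aff}} = 1 - d/2$ and apply it to $0 \leq d \leq \beta$, where the paper's one-line proof takes $d \geq 0$ for granted and you justify it from the clamp $s \leq 1$. Your closing convex-combination step is the remark the paper makes immediately after the proof, that the threshold is disabled for every $\theta \leq 1-\beta/2$.
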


\begin{proof}
Immediate from $v_{\mathrm{aff}} = 1 - d/2$ and $0 \leq d \leq \beta$.
\end{proof}

Since \eqref{eq:weight} is a convex combination of the $v_e$, every edge weight
inherits that interval. The consequence is not a loss of precision but a loss of
the threshold: a gate at $\beta = 0.35$ makes every weight at least $0.825$, so a
write threshold set anywhere at or below $0.825$ admits everything.

The rescaled chord distance \eqref{eq:vchord} does not collapse, because
$\sqrt{\cdot}$ is steep near zero exactly where the gate concentrates the mass.
Table~\ref{tab:valueterms} shows the two forms on the two cases that motivated
the change.

\begin{table}[t]
\centering
\caption{The two value terms on the gated range. The affine form separates the
strong and the marginal neighbourhood by $0.100$; the chord form separates them
by $0.163$ and, more importantly, places both far above a threshold that can
actually be set below them.}
\label{tab:valueterms}
\begin{tabular}{lrrrr}
\toprule
Case & $\cos$ & $d$ & $v_{\mathrm{aff}} = 1-\tfrac{d}{2}$ & $v = 1-\sqrt{d/2}$ \\
\midrule
same chunk                  & $1.00$ & $0.00$ & $1.000$ & $1.000$ \\
$50$ neighbours, strong     & $0.90$ & $0.10$ & $0.950$ & $0.776$ \\
$200$ neighbours, marginal  & $0.70$ & $0.30$ & $0.850$ & $0.613$ \\
gate boundary ($\beta=0.35$)& $0.65$ & $0.35$ & $0.825$ & $0.582$ \\
\bottomrule
\end{tabular}
\end{table}

\subsection{The admissible threshold}

Proposition~\ref{prop:affine} generalises into the constraint that governs how
the write threshold may be configured.

\begin{proposition}[Threshold floor]
\label{prop:floor}
With the value term \eqref{eq:delta} and the gate $d \leq \beta$, every node
pair weight satisfies
\[
  \mathrm{weight} \;\geq\; 1 - \sqrt{\beta/2}.
\]
Consequently a write threshold $\theta \leq 1 - \sqrt{\beta/2}$ filters nothing.
\end{proposition}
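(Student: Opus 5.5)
The plan is to mirror the argument after Proposition~\ref{prop:affine}. First I would bound the value term pointwise on every admissible piece of evidence. Then I would transport that bound through the weighted mean \eqref{eq:weight}, using the fact that it is a convex combination.

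\emph{Step 1: the pointwise bound.} I would list the sources of evidence in $E(a,b)$ using the chunk-pair definition. A self-pair $(c,c,1)$ has $d=0$ and hence $v=1$. A top-$k$ neighbour pair passed the gate, so $0 \le d \le \beta$. The lower bound $d \ge 0$ holds because $s$ is clamped to $[-1,1]$ in \eqref{eq:delta}. On $[0,2]$ the map $d \mapsto 1-\sqrt{d/2}$ is monotone decreasing. Therefore every evidence term satisfies $v_e \ge 1-\sqrt{\beta/2}$, and the self-pair case is covered because $1 \ge 1-\sqrt{\beta/2}$.

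\emph{Step 2: transport through the mean.} Each weight is $w_e = 1/(\varepsilon+d_e) > 0$, since $\varepsilon>0$ and $d_e \ge 0$. A node pair has a weight only if $E(a,b)$ is nonempty, so $\sumw > 0$. The ratio $\sumwv/\sumw$ is then a convex combination of the $v_e$. It is bounded below by $\min_e v_e$, which by Step 1 is at least $1-\sqrt{\beta/2}$.

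\emph{Step 3: persisted accumulators and the consequence.} I would remark that the bound survives the accumulators of Section~\ref{sec:incremental}. Those accumulators are just sums of the same per-evidence terms over all documents. Every term entered either through the gated $k$-NN query or as a self-pair, so the multiset over the whole history still consists only of gated evidence. For the consequence, the select step keeps a pair when its weight is at least $\theta$. If $\theta \le 1-\sqrt{\beta/2}$, every touched pair clears that test, so the threshold filters nothing.

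The main obstacle is not the algebra but making sure no evidence escapes the gate. That is why I would spell out explicitly that only the two sources in the chunk-pair definition feed $E(a,b)$. I would also note that the clamp on $s$ rules out a negative $d$, which would give $v>1$; this is harmless for a lower bound but would make $w_e$ ill-defined near $d=-\varepsilon$.
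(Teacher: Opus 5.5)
Your proof is correct and follows the paper's argument. Positive weights make \eqref{eq:weight} a convex combination bounded below by $\min_e v_e$, and the monotone decrease of $1-\sqrt{d/2}$ together with the gate $d_e \le \beta$ gives $v_e \ge 1-\sqrt{\beta/2}$ for every term. The extra points you spell out are details the paper leaves implicit, not a different route: the self-pair at $d=0$; nonemptiness of $E(a,b)$, so that $\sumw>0$ and the $\pi(0,0,0)=0$ convention never applies; and the fact that the persisted history contains only gated terms, which tacitly assumes every document was processed under the same $\beta$.
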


\begin{proof}
Every term of \eqref{eq:weight} has $w_e > 0$, so the ratio is a convex
combination of the $v_e$ and is bounded below by $\min_e v_e$. Each $v_e = 1 -
\sqrt{d_e/2}$ is decreasing in $d_e$, and the gate gives $d_e \leq \beta$;
therefore $v_e \geq 1 - \sqrt{\beta/2}$ for every $e$, and so does the
combination.
\end{proof}

At the default $\beta = 0.35$ the floor is $1 - \sqrt{0.175} = 0.582$. The
default threshold is $0.7$, chosen to sit above it. This is the practical
content of Proposition~\ref{prop:floor}: the threshold and the gate are not
independent knobs, and a user interface that presents them as independent must
at least display the floor that the current $\beta$ implies. The system does,
and the floor it displays is $1-\sqrt{\beta/2}$; an earlier version displayed
$1 - \beta/2$, which is the floor for the value term that was replaced.

\subsection{The near-miss: \texorpdfstring{$L_2$}{L2} in the weight}

There is a second, more tempting repair for the compression of
Proposition~\ref{prop:affine}: leave the value term affine and put the square
root in the \emph{weight} instead, $w' = 1/(\varepsilon + \lVert a-b\rVert)$.
This is Shepard $p=1$, and it is the wrong fix for two independent reasons.

\paragraph{It flattens the weighting it was meant to sharpen.} Compare a very
close neighbour at $d = 10^{-3}$ against one at the gate boundary $d = \beta =
0.35$. Under $w = 1/(\varepsilon+d)$ the ratio of their influence is
$1000 : 2.86$, a spread of $350\times$. Under $w' = 1/(\varepsilon+\sqrt{2d})$
it is $22.4 : 1.20$, a spread of $19\times$. Inverse-distance weighting exists to
let near evidence dominate far evidence; reducing the dynamic range by more than
an order of magnitude defeats the mechanism.

\paragraph{It does not address the same-chunk term either.} The stated
motivation for moving the $L_2$ term is usually the $\cos = 1$ spike. But at
$\cos = 1$ we have $\lVert a-b\rVert = 0$ as well as $d = 0$, so $w' =
1/\varepsilon = 10^6$ exactly as before. The spike is a property of
$\varepsilon$, not of the exponent, and changing the exponent leaves it
untouched.

\subsection{The same-chunk spike is intentional}

Two entities extracted from the same paragraph enter as a self-pair at $\cos =
1$, giving $w = 1/\varepsilon = 10^6$ and $v = 1$. Such a term dominates every
near-neighbour term in \eqref{eq:weight} by three to six orders of magnitude, so
a node pair with even one shared paragraph will score close to $1$.

This is the intended behaviour, and it encodes the ordering we actually believe:
sharing a paragraph is much stronger evidence of a relationship than being
mentioned in two paragraphs that resemble each other. The mechanism is worth
naming explicitly, because it means $\varepsilon$ is not a numerical guard
against division by zero --- it is the parameter that sets how much a shared
paragraph outweighs a similar one, and $10^{-6}$ is a decision.

\subsection{Expansion and accumulation}

Algorithm~\ref{alg:accumulate} gives the expansion of chunk pairs into node-pair
accumulators --- steps 10 to 12 of Algorithm~\ref{alg:master} at the level of the
inner loop. Two guards in it are easy to omit and each produces a systematic
overcount.

\begin{algorithm}
\caption{Accumulate --- chunk pairs into node-pair IDW accumulators}
\label{alg:accumulate}
\begin{tabbing}
xxx\=xxx\=xxx\=xxx\=\kill
\textbf{input:} chunk pairs $P$, membership $\mu$ \\
\textbf{output:} $A : \{a,b\} \mapsto (\sumwv, \sumw, \textit{support})$ \\[2pt]
$A \gets \emptyset$ \\
\textbf{for} $(c_i, c_j, s) \in P$ \textbf{do} \\
\> $L \gets \mu(c_i)$; \quad $R \gets \mu(c_j)$ \\
\> \textbf{if} $L = \emptyset$ \textbf{or} $R = \emptyset$ \textbf{then continue} \\
\> $(w, wv) \gets \big(\tfrac{1}{\varepsilon+1-s},\ \tfrac{1-\sqrt{(1-s)/2}}{\varepsilon+1-s}\big)$ \\
\> $\textit{seen} \gets \emptyset$ \qquad \textit{// reset per chunk pair} \\
\> \textbf{for} $a \in L$, $b \in R$ \textbf{do} \\
\> \> $q \gets \textsc{canonical}(a,b)$ \\
\> \> \textbf{if} $q = \bot$ \textbf{or} $q \in \textit{seen}$ \textbf{then continue} \\
\> \> $\textit{seen} \gets \textit{seen} \cup \{q\}$ \\
\> \> $A[q] \mathrel{+}= (wv,\ w,\ 1)$ \\
\textbf{return} $A$
\end{tabbing}
\end{algorithm}

\paragraph{Guard 1: one contribution per chunk pair.} The \textit{seen} set is
reset for each chunk pair and prevents a single chunk pair from contributing to
the same node pair twice. It is required by the self-pair case: when $c_i = c_j$,
the double loop enumerates both $(a,b)$ and $(b,a)$, which canonicalise to the
same node pair, so every same-chunk co-occurrence would otherwise be counted
twice --- and since these are the dominant terms, the resulting weight would be
wrong for exactly the pairs the layer cares most about. The same applies to a
symmetric neighbour pair whose two chunks share entities.

\paragraph{Guard 2: mirrored chunk pairs from the query side.} The $k$-NN query
restricts only its \emph{query} side to the arriving document. Two chunks of that
same document that are mutual nearest neighbours therefore return as both
$(c_0, c_1, s)$ and $(c_1, c_0, s)$. Because \textit{seen} is per chunk pair, it
does not deduplicate across them, and the node pair is counted twice. The fix is
a deduplication of chunk pairs on the unordered key before accumulation. Note
that this affects intra-document pairs only: a cross-document neighbour is
queried from one side only and has no mirror. Cosine is symmetric, so discarding
the duplicate loses nothing --- both directions carry the same value.

\subsection{Selection}

Two gates are applied to what is \emph{written}, and neither is applied to what
is accumulated.

\paragraph{Threshold.} A candidate survives if $\mathrm{weight} \geq \theta$,
with $\theta = 0.7$ by default, admissible under Proposition~\ref{prop:floor}.

\paragraph{Per-endpoint cap, with union semantics.} A candidate survives if it is
among the top $n$ by weight of \emph{either} endpoint --- a union, not an
intersection. The asymmetry is deliberate: under intersection semantics, an
entity whose only similarity edge happens to be some hub's fortieth-best would
lose that edge and be orphaned, so the cap would systematically strip the
periphery to tidy up the centre. Under union semantics a quiet node keeps its
one edge while a popular node is still prevented from radiating unboundedly.

Finally, each surviving pair becomes an \EMB{} edge carrying its weight, its
support count and a kind marker, with a deterministic identifier derived from
the endpoint pair and the label so that re-running the pass merges rather than
duplicates.

%% file: sections/06-incremental.tex
\section{A layer that evolves with the corpus}
\label{sec:incremental}

The pass is designed for a graph that is never finished. Documents arrive one at
a time, in no particular order, and the ontology they fold into gains types,
aliases and facets as they do. This section states the properties that make the
similarity layer well behaved under those conditions, and it argues that they
follow from a single representational choice rather than from any machinery.

\subsection{The accumulator is a commutative monoid}

Each node pair carries the triple
\[
  \alpha \;=\; (\sumwv,\ \sumw,\ \textit{support}) \;\in\;
  \mathbb{R}_{\geq 0} \times \mathbb{R}_{\geq 0} \times \mathbb{N},
\]
combined componentwise by addition, with identity $(0,0,0)$. Componentwise
addition on this set is associative and commutative; $(\mathcal{A}, +, 0)$ is
therefore a commutative monoid.

The edge weight is not stored. It is the function
$\pi(\alpha) = \sumwv / \sumw$ applied at read time, with $\pi(0,0,0) = 0$ by
convention. Keeping the accumulator rather than the weight is what makes the
next two results available; storing a weight and trying to update it in place
would require either the individual terms or a running count, and would make the
merge dependent on how the batches were cut.

\begin{proposition}[Order independence]
\label{prop:order}
Let $D_1, \ldots, D_m$ be documents and let $\mathrm{acc}(D)$ denote the
accumulator map that Algorithm~\ref{alg:accumulate} produces for $D$ against a
fixed corpus. Then the stored state after processing the documents in any
permutation is the same, and equals $\sum_{t} \mathrm{acc}(D_t)$ pairwise.
Consequently the weight of every node pair is independent of arrival order.
\end{proposition}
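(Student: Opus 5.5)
The argument rests on one observation: processing a document changes the stored state only by adding a fixed increment. That increment does not depend on the state it is added to. Concretely, the stored state is a finitely supported map $S$ from canonical node pairs to $\mathcal{A}$, with absent keys read as $(0,0,0)$. This map space is itself a commutative monoid under pointwise addition, because $\mathcal{A}$ is one. Processing $D_t$ performs the update $S \mapsto S + \mathrm{acc}(D_t)$: step~12 of Algorithm~\ref{alg:master} adds each touched pair's contribution into its running total, and persists \emph{un-gated}. The first step is to state this transition rule precisely. It uses two facts already in place. Selection (threshold and per-endpoint cap) only reads $S$ to decide which \EMB{} edges to write and never removes or alters an accumulator. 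And accumulation is an upsert-by-addition, not an overwrite.

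The second step, which I expect to be the real obstacle, is to show that $\mathrm{acc}(D_t)$ is well defined independently of the order. Algorithm~\ref{alg:accumulate} needs the chunk pairs $P$, and these come from a $k$-NN query whose search side is the store at the time of arrival. That store is not the same under different permutations. The statement disposes of this by fixing the corpus: $\mathrm{acc}(D)$ is computed ``against a fixed corpus''. I will take this hypothesis at face value and make it explicit. Under it, $P$ depends only on $D$ and the fixed corpus, and $\mu$ restricted to $D$'s chunks is fixed by extraction. So $\mathrm{acc}(D_t)$ is a function of $D_t$ alone.

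It is worth noting what the two guards of Algorithm~\ref{alg:accumulate} contribute. The per-chunk-pair \emph{seen} set and the dedup of mirrored pairs make $\mathrm{acc}(D_t)$ a deterministic function of the chunk-pair set. They also make it independent of the enumeration order inside $P$, since each pair's contribution is again a sum in $\mathcal{A}$. I will flag, in a remark rather than the proof, that this hypothesis is where genuine arrival-order effects would hide. If the neighbour sets are not fixed, equality holds only for the multiset of increments actually computed.

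With that, the third step is a routine induction on $m$. Starting from the identity map $0$, after processing $D_{\sigma(1)},\dots,D_{\sigma(m)}$ the state is $\sum_{t=1}^{m}\mathrm{acc}(D_{\sigma(t)})$. Associativity gives the unparenthesised sum, and commutativity, applied pointwise, lets us reorder it to $\sum_t \mathrm{acc}(D_t)$ for any permutation $\sigma$. Finally, the weight of $\{a,b\}$ is $\pi(S(\{a,b\}))$, where $\pi$ is the map defined above, applied to a state already shown to be permutation-invariant. Hence the weight is invariant too, and so is every selection decision derived from it by the threshold $\theta$.
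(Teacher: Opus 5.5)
Your argument is correct and follows the paper's proof: the stored state is the pointwise sum of the per-document increments in the commutative monoid $\mathcal{A}$, so associativity and commutativity make it permutation-invariant, and $\pi$ inherits that invariance. You add the explicit transition rule, the induction, and the load-bearing ``fixed corpus'' hypothesis, which the paper discusses only after its one-line proof. One caution on your final clause: the written \EMB{} edge set is not order-invariant, because $\theta$ and the per-run cap are applied to partial sums as each document arrives and written edges are never retracted, as the paper notes explicitly. Only a threshold re-applied to the final accumulator is order-invariant.
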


\begin{proof}
The stored state is the pairwise sum of the per-document accumulator maps in the
monoid $\mathcal{A}$, and addition there is associative and commutative, so any
permutation yields the same sum. $\pi$ is a function of that sum alone.
\end{proof}

Proposition~\ref{prop:order} holds only for the accumulator, not for the emitted
edge set at intermediate times: which edges exist \emph{after the third of five
documents} does depend on order, because the threshold has been applied to a
partial sum. That is a statement about a snapshot, not about the final state, and
it is exactly the behaviour one wants --- the layer reflects the evidence seen so
far.

The phrase \emph{against a fixed corpus} in Proposition~\ref{prop:order} is
load-bearing, and it is worth separating the two halves it distinguishes, because
only one of them is order-free. The \emph{sum} is: adding the same terms in a
different order gives the same accumulator. Which terms get \emph{proposed} is
not, because a document's top-$k$ query runs against the vector table as it
stands at that moment, and an early document sees a smaller table than a late
one. We can measure the split. Recomputing the whole corpus in one shot yields
$1{,}415{,}574$ node pairs against the $1{,}517{,}795$ the incremental run
actually left behind. On the $1{,}051{,}617$ pairs common to both, the weights
agree \emph{to the last bit} for $89.8\%$, with a mean absolute difference of
$8.1 \times 10^{-4}$; mean support is $1.21$ against $1.23$. The residual is
almost entirely candidate generation: $363{,}957$ pairs exist only in the batch
run and $466{,}178$ only in the incremental one, because the neighbourhoods
differed. So the arithmetic is order-independent as claimed, and the pass as a
whole is order-\emph{sensitive} in what it considers --- which is the price of
never recomputing, and is why the incremental run finds \emph{more} than a single
batch pass with the same parameters rather than less.

\begin{remark}[The self-similarity term is excluded, not incorporated]
The $k$-NN query excludes the trivial pair $(c, c)$ from its results, so the only
$\cos = 1$ terms in the sum are the same-chunk co-occurrence self-pairs
deliberately injected in the expand stage. Without that exclusion every chunk
would be its own nearest neighbour and the corresponding term would be injected
twice.
\end{remark}

\subsection{Persist un-gated: the reason a pair can cross later}

The threshold $\theta$ and the per-endpoint cap are applied to the candidate list
at write time. They are \emph{not} applied before persistence. The accumulators
of every touched pair are stored regardless of how weak the pair currently looks.

This is the property the whole design is arranged around. A pair of entities
mentioned in two mildly similar paragraphs accumulates a small $\sumw$ and a
weight that may sit below $\theta$; it produces no edge and is invisible in the
graph. When a later document supplies a third paragraph that is strongly similar
to both, that pair's accumulator gains a large $w$ term and its weight moves ---
and the edge appears, with a support count that records how many pieces of
evidence were involved. Nothing was recomputed, and no record of the earlier weak
evidence had to be reconstructed, because it was never discarded.

The converse behaviour is worth being explicit about: because the accumulator is
monotone in the additive sense, and because the pass has no delete path, the
weight of a pair can move in either direction as evidence accrues --- a burst of
weak but very numerous evidence lowers a mean that was previously set by a single
strong term --- but the \emph{support} count only rises, and no pair ever leaves
the accumulator table. An edge already written whose weight later falls below
$\theta$ is not retracted by this pass.

\subsection{Cost per document}
\label{subsec:cost}

Let $N$ be the number of chunks in the graph and $c$ the number of chunks in the
arriving document, with $c \ll N$. The query side of the $k$-NN is restricted to
the arriving document while the search side is the whole corpus, so the pass
performs $\Theta(c\,N)$ similarity evaluations exactly --- one pass over the
corpus per new chunk --- and not $\Theta(N^2)$. Over a corpus ingested one
document at a time the total is $\Theta(N^2)$ evaluations in the aggregate, but
distributed as a constant amount of work per document rather than a rebuild.

The expansion stage is bounded by $\sum_{(c_i,c_j)} |\mu(c_i)|\cdot|\mu(c_j)|$,
which is small in practice because $|\mu(c)|$ is the number of entities the
extractor found in one paragraph. The accumulate stage is a merge into a keyed
store and, measured on the running system, dominates the wall clock of the whole
pass --- which is why it is reported as its own progress phase rather than being
folded into the similarity phase, and why the upsert formulation of
Section~\ref{sec:impl} mattered enough to measure.

An approximate index on the search side brings the per-document cost to roughly
$\Theta(c \log N)$ at the usual recall trade-off. One is now built, but only past
a threshold, and the reason for the threshold is that below it there is nothing
to buy. Against a store of $N$ vectors the scan is bound by the dot products
themselves rather than by memory bandwidth; measured per document at four chunks,
it costs $6.5$\,ms at $10^4$ stored vectors, $46.3$\,ms at $10^5$ and $67.0$\,ms
at $10^6$. Extraction, meanwhile, runs at a median of $91.1$ seconds per
document. Even at a million chunks the scan is under a tenth of one percent of
what the document already costs, and no index changes that ratio in a way anyone
would notice --- which is why the index is switched on by corpus size rather than
used unconditionally.

What does eventually bind is the \emph{cumulative} quadratic: ingesting a corpus
one document at a time performs $N^2/2$ similarity evaluations in total, which
measures at about four hours at a million chunks and seventeen days at ten
million. That is the horizon
Section~\ref{sec:limitations} refers to, and it is a statement about an aggregate
ingest rather than about any single insertion.

\paragraph{Alternatives considered.} Several ways of avoiding or accelerating the
scan were measured on this corpus. Every one that \emph{partitions} the vectors
failed; the one that does not --- a navigable small-world graph --- is what ended
up adopted.

Clustering the vectors and scoring pairs by co-membership --- $k$-means and its
spherical variant~\citep{dhillon2001}, in the self-organising tradition
of~\citet{kohonen1990} and the family surveyed by~\citet{jain2010} --- changes the
written graph by about one percent. It is unusable without the weighting of
Section~\ref{sec:idw}, and it forfeits incrementality outright: a centroid is a
function of the whole corpus, so refitting it retroactively invalidates work
already done. Using that same clustering as an \emph{index} rather than as a
score does better than chance, but not by enough: an inverted-file partition at
$K = 32$ recovers $38\%$ of the exact neighbours while scanning $4\%$ of the
corpus, and is still at $86\%$ having scanned $32\%$. Space-partitioning trees
fail outright, because distances here are too concentrated for a pruning bound to
exclude anything~\citep{weber1998,beyer1999}.

Changing the \emph{basis} before partitioning does not rescue it. Clustering on
wavelet coefficients or band energies is no better than clustering on the raw
coordinates --- and a control that randomly permutes the coordinates scores
slightly \emph{higher}, which means the transform basis carries no usable
information about neighbourhood here. Nor does repairing the geometry. The
embeddings are severely anisotropic: the mean of all $1{,}010$ unit vectors has
norm $0.897$, and pairwise cosine has median $+0.806$, interquartile range
$0.039$, and a corpus-wide minimum of $+0.630$ --- no two chunks are even mildly
dissimilar. Centring and renormalising widens the angular spread by $3.5\times$
(median $-0.009$, IQR $0.137$), exactly the isotropy the literature prescribes,
and moves inverted-file recall by less than a percentage point at matched scan
fraction; relative contrast~\citep{beyer1999} in fact \emph{falls}, $0.643$ to
$0.488$. Three partitions, three failures, and the common factor is that this
corpus has no partition structure for a basis to expose.

Greedy descent over a navigable small-world graph~\citep{malkov2020,douze2024}
does not partition, and it is the one approach that clears the bar. Judged where
it matters --- on the \emph{edge set} the pass finally writes, after expansion,
weighting, thresholding and the per-endpoint cap --- an HNSW index reproduces
$99.6\%$ of the exact result at its cheapest search setting and $100\%$ from
$\mathrm{ef}=80$ upward, losing no edge at all; the residual disagreement is nine
extra edges out of $57{,}475$. Chunk-level recall understates this substantially
($98.7\%$ at the same setting), because the neighbours an approximate search
misses are in the far tail of the top-$k$, which the threshold and the cap
discard anyway. Insertion into the index is incremental and lossless, so the
accumulator property of Section~\ref{sec:idw} survives: adding one document costs
$0.003$\,s against $0.185$\,s to rebuild.

What decides whether the index is worth using is not the search at all. Binding a
$240$-component query vector as a statement parameter costs $7.9$\,ms, against
$0.06$\,ms for a trivial round trip, and the engine will not batch an index
probe; inlining the vector as a literal instead takes a query from $11.9$ to
$5.3$\,ms and is what makes the indexed route win. Per document, the exact scan
costs $3.2$/$30.4$/$42.2$\,ms at $10^4$/$10^5$/$10^6$ stored vectors against the
index's $14.2$/$19.9$/$23.8$: the crossover is near $10^5$, and below it the
exact scan is simply faster. Hence the threshold.

Two things that help less than expected are worth recording, since both are the
obvious first guesses. Batching many documents into one query improves the
constant by only $1.5$--$2.0\times$. Expressing the same computation as a BLAS
matrix multiply runs the arithmetic twenty times faster and gains only
$2$--$4\times$ end to end, because the top-$k$ selection it uncovers is itself
$\Theta(cN)$. Exact top-$k$ over all pairs is irreducibly linear in the corpus
however it is expressed; only searching less changes that, which is what an index
does. The supporting measurements are in the project's engineering notes rather
than here, since none of them changed the design.

\subsection{What ``evolving ontology'' means here, and what it costs}

Three things change as the corpus grows: the instance graph gains nodes and
edges; the ontology gains canonical types, aliases and facets
(Section~\ref{subsec:folding}); and the similarity layer gains \EMB{} edges ---
Figure~\ref{fig:ontology-embedded} shows what the third of those does to the
schema after five documents. The
first two are extraction's business. The third is arranged so that it never has
to be told about the first two: the pass reads the membership map and the vector
table, and writes edges between node identifiers. It holds no schema, caches no
type list, and is unaffected by a type being renamed to its canonical form,
because the fold happens before the identifiers it works with are minted.

There is one cost, and it should be stated plainly. The accumulators encode the
value term: $\sumwv$ is a sum of $w \cdot v$ products computed under a specific
$v$. Changing $v$ --- as happened when the affine form of Section~\ref{sec:idw}
was replaced --- invalidates every stored accumulator, because \eqref{eq:weight}
recomputes from history and the history is now in mixed units. There is no
migration for this; the correct response is to re-extract into a fresh graph, and
mixing is not detectable after the fact. A version marker on the accumulator
table would make it detectable, and is the obvious hardening.

\subsection{A per-run bound, not a per-graph invariant}

One honest caveat about the per-endpoint cap. It is applied over the candidates
touched by \emph{this run} --- the node pairs this document's chunks reached ---
not over the graph's whole accumulator table. A node can therefore finish with
more than $n$ \EMB{} edges after many documents, even though no single run gave
it more than $n$.

This is deliberate. Enforcing a true per-graph invariant would mean ranking every
node against the full accumulator table on every extraction, which is a real and
growing cost for a property nothing downstream depends on. The cap exists to
bound the fan-out any one pass can add, and it does that. Reporting it as a
per-graph guarantee would be the error; widening the read to make it one would be
the more expensive error.

\begin{figure}[t]
\centering
\includegraphics[width=0.94\textwidth]{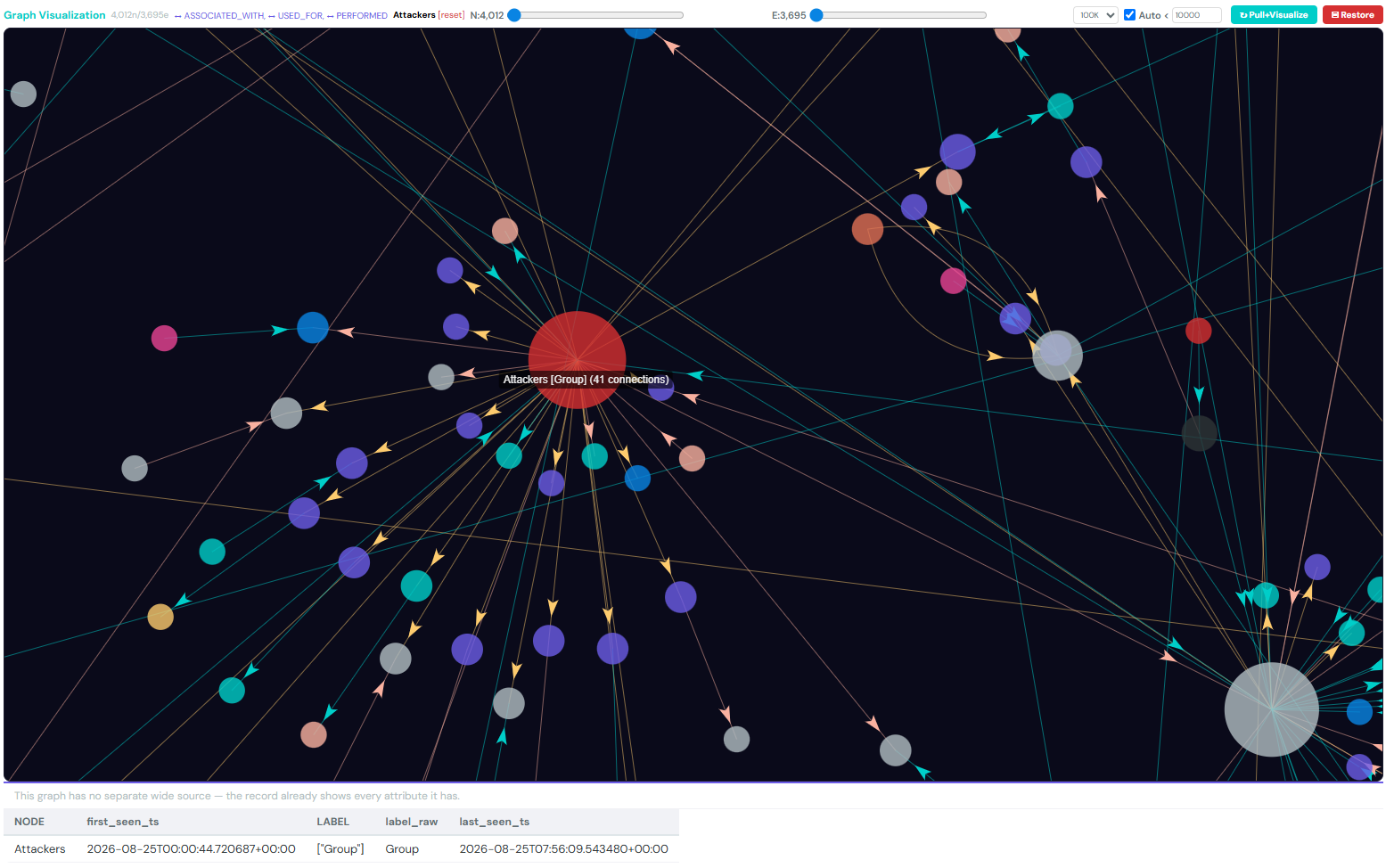}
\caption{The extreme case of the per-run cap, and what a node in one of these
graphs actually holds. \code{Attackers} is the highest-degree node in
\code{threat\_rss} after $246$ documents: $809$ edges, of which $742$ are \EMB{}
and $67$ extracted (\code{PERFORMED} $28$, \code{GAINS\_ACCESS\_TO} $10$,
\code{USED\_FOR} $8$, then a tail). It appears in $37$ chunks across $28$
documents, so a nominal per-run cap of $n = 10$ produced an average of $26$
derived edges per document that mentioned it. That is the union semantics of
Section~\ref{sec:idw} compounding with the per-run scope: the cap keeps each
endpoint's \emph{own} top ten, and a frequently-mentioned node is additionally
carried in on many other endpoints' lists. No single run violated the bound; the
bound simply does not compose. The tooltip's ``$41$
connections'' is its degree within the three edge labels currently selected. The
strip below the canvas is the whole node record, and it is the point of the
skinny-graph design: \code{NODE}, \code{LABEL}, \code{label\_raw} and two
timestamps, with no domain attributes at all --- the workbench says so in as many
words, because for an extracted graph there is no wide relational source to join
back onto. What the documents deposited on this node is not a row; it is a
neighbourhood. The two timestamps are the incremental claim in its most direct
form: \code{Attackers} was minted $44$ seconds into the run and last written
$7$\,h\,$55$\,m later, so almost every accumulator behind those $742$ edges was
updated by a document that arrived after the node already existed.}
\label{fig:hub}
\end{figure}

Figure~\ref{fig:hub} shows the largest instance of this in the corpus run of
Section~\ref{sec:scale}, together with the reason a node can absorb that many
derived edges without the graph carrying any more \emph{data} about it.

%% file: sections/07-implementation.tex
\section{Implementation}
\label{sec:impl}

The pass is implemented in three modules with a strict separation that is worth
describing, because it is what let every claim in Section~\ref{sec:idw} be
pinned by a test that touches no engine, no database file and no network.

\begin{itemize}
\item \textbf{The arithmetic} --- \eqref{eq:delta}, \eqref{eq:weight},
      canonicalisation, Algorithm~\ref{alg:accumulate}, the two selection gates
      and the \EMB{} edge shape --- is a module of pure functions. It has no
      clock, no I/O and no configuration.
\item \textbf{The embedder} resolves the route to the embedding provider, pins
      width and model per graph, and refuses conflicts
      (Section~\ref{sec:embedding}). It also provides a deterministic offline
      embedder used only by tests: it has no semantics whatsoever, and exists so
      that the pass can be exercised end to end reproducibly.
\item \textbf{The orchestration} walks the seven stages. The store and the graph
      adapter are \emph{parameters} rather than module-level lookups, so the
      whole pass runs against fakes.
\end{itemize}

\subsection{\texorpdfstring{Top-$k$}{Top-k}: an aggregate, not a window function}
\label{subsec:maxby}

The neighbour query is a self-join over the vector table, restricted on the
query side to the arriving document and searching the whole graph. Listing~1
gives the generated form.

\begin{lstlisting}[language=SQL,caption={The top-$k$ neighbour self-join. The
top-$k$ is an aggregate over the join, not a ranked window over it.},
captionpos=b]
SELECT q.doc_uri AS q_doc, q.chunk_index AS q_idx,
       max_by({'d': e.doc_uri, 'i': e.chunk_index,
               's': array_inner_product(q.vec, e.vec)},
              array_inner_product(q.vec, e.vec), 20) AS nbrs
  FROM chunk_vectors_768 q, chunk_vectors_768 e
 WHERE q.graph = ? AND q.doc_uri = ?
   AND e.graph = q.graph
   AND NOT (e.doc_uri = q.doc_uri AND e.chunk_index = q.chunk_index)
   AND array_inner_product(q.vec, e.vec) >= 0.65
 GROUP BY q.doc_uri, q.chunk_index
\end{lstlisting}

The natural way to write top-$k$-per-group in modern SQL is a window function:
rank the join output by similarity within each query chunk and keep the first
$k$. Measured at $768$ dimensions over $100$k rows, that form took
$35{,}819$\,ms against this form's $1{,}403$\,ms --- a factor of $25$ --- and
profiling attributed $96\%$ of the difference to the window operator itself
rather than to the dot products. The reason is structural: the window form must
materialise and sort every candidate pair before discarding all but $k$, whereas
the aggregate maintains a bounded top-$k$ heap per group and never materialises
the rest. At these dimensions the join output is the expensive object, so not
building it is worth more than any improvement to the distance computation.

Two implementation notes. The count argument of the aggregate must be a
constant, so $k$ and the similarity floor are interpolated into the statement
text rather than bound as parameters; both are numerically coerced immediately
before interpolation, so nothing unvetted reaches the SQL. And the exclusion of
the trivial self-match is written as a predicate on the pair of keys rather than
on similarity, since two distinct chunks with identical text would otherwise be
excluded as well.

\subsection{Folding the accumulators: one statement, not one per row}

The accumulate stage merges the run's contributions into the stored table with
running sums, \code{sum\_wv = excluded.sum\_wv + sum\_wv} and likewise for the
other two components.

The first implementation issued this as a prepared statement executed once per
row. That runs each row as its own statement and transaction, measured at
$82.3$\,s for a $19{,}900$-pair merge --- a few milliseconds per row, dominating
the entire pass. Rewriting it as a single multi-row \code{VALUES} list makes the
conflict resolution and the insert happen once for the whole batch. The per-pair
arithmetic is unchanged; only the number of statements carrying it differs.

\subsection{Engine neutrality, and declining by name}

The link stage writes through the same adapter method extraction uses, with an
empty node list. Four of the supported engines implement it directly: FalkorDB,
Neo4j and ArangoDB treat a label as a value on the element, so introducing
\EMB{} is an ordinary write, and Kinetica's graph DDL accommodates it as well.

Warehouse-backed property graphs are the exception, and the reason is the model
rather than the product: when a label is part of the schema, there is no additive
element write --- introducing an edge label is a DDL change. The pass detects
this by comparing the adapter's bound method against the base class's, which is
the only reliable test when the base declares the method and raises. It then
returns a report naming the engine and the reason, rather than propagating an
exception the caller would surface as an internal fault. The distinction matters
operationally: ``similarity edges are not supported on this engine'' is
actionable, and a stack trace is not.

The graph's declared direction is read from the store and passed through, rather
than assumed. On engines whose DDL is genuinely directed or undirected, an
\EMB{} write must agree with what extraction already declared for that graph and
must not silently redeclare it.

\subsection{Keeping a cosine artefact out of the answer path}

An \EMB{} edge is a similarity artefact with a numeric weight. It is not a fact,
and the natural-language question-answering path must not traverse it and then
report the result as though it were one.

Suppressing it requires two changes, and the second is the one that is easy to
miss. The obvious one is to drop \EMB{} from the relation-type list handed to the
model. The other is that the schema summary also includes the ontology as
subject--predicate--object triples, and \EMB{} appears there too; dropping it from
the type list alone leaves it fully visible in the triples, and a model shown a
traversable triple will use it. Both are filtered.

This is a policy decision expressed in a prompt boundary rather than in the data,
and it is reversible: the edges remain in the graph, fully visible to direct
Cypher or GQL queries, to the visualiser, and to any consumer that has decided
for itself what a weighted similarity edge means.

\begin{figure}[t]
\centering
\includegraphics[width=\textwidth]{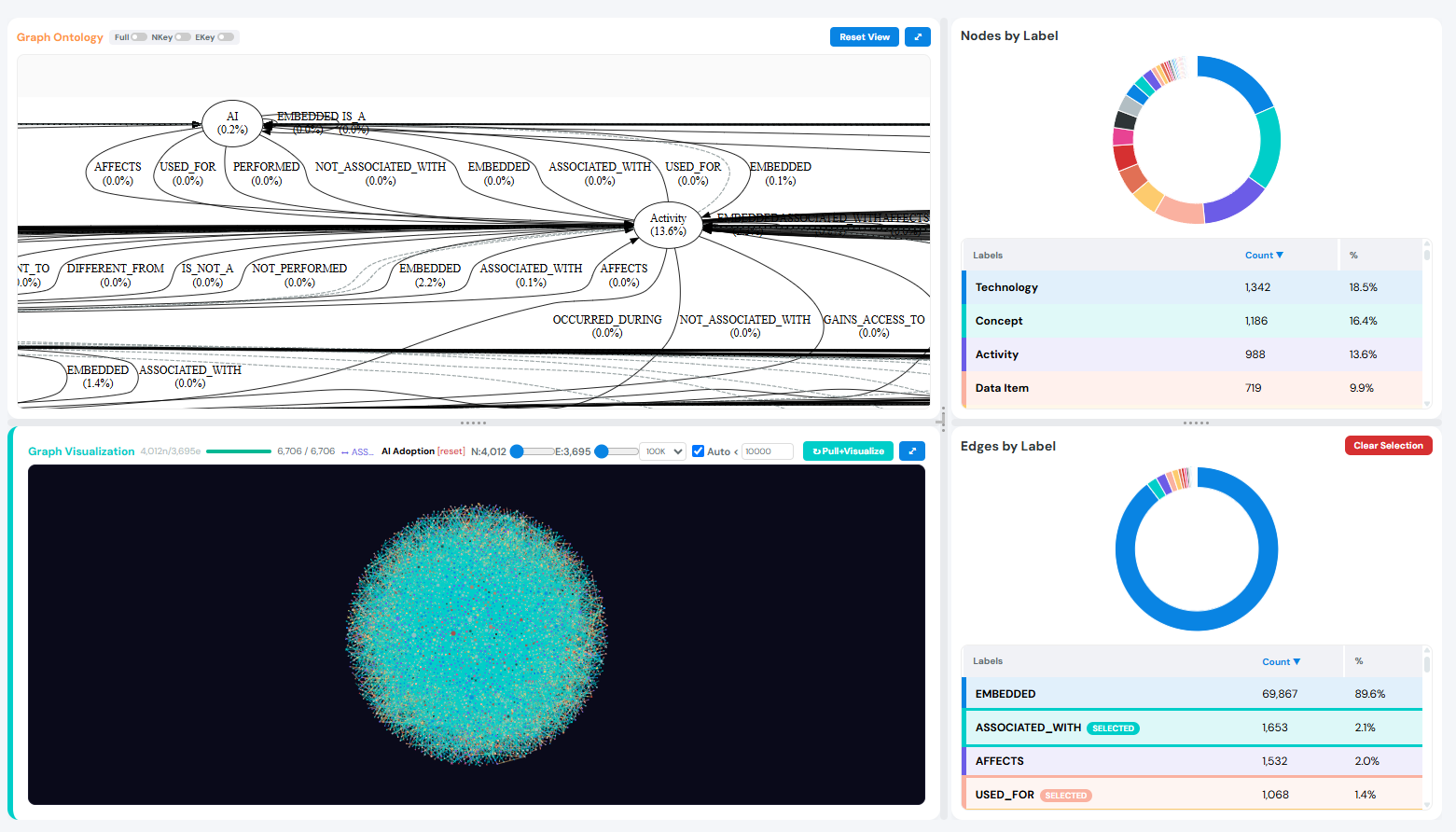}
\caption{The workbench on \code{threat\_rss}, which is the instrument every
measurement in Section~\ref{sec:scale} was read from. \emph{Top left}, the
ontology: the folded canonical types as a graph over the relation types that
join them, each annotated with its share of traffic, so \code{Activity} at
$13.6\%$ is a type a seventh of the graph's nodes carry and the several
\code{EMBEDDED} arcs are the derived layer entering the schema as an ordinary
relation type. It is drawn at a legible zoom deliberately; at $246$ documents the
full folded ontology is too wide to render whole, which is itself the finding of
Section~\ref{subsec:vocab}. \emph{Right}, the two label distributions. The lower
one is the ratio this article keeps returning to, read directly off the graph:
\code{EMBEDDED} $69{,}867$ edges at $89.6\%$, against \code{ASSOCIATED\_WITH}
$1{,}653$ ($2.1\%$), \code{AFFECTS} $1{,}532$ ($2.0\%$) and \code{USED\_FOR}
$1{,}068$ ($1.4\%$). \emph{Bottom left}, the canvas. The two \code{SELECTED}
chips in the edge table are the mechanism behind
Figure~\ref{fig:canvas-extracted}: the derived layer is a layer one turns off,
not a property of the graph one is stuck with.}
\label{fig:workbench}
\end{figure}

Figure~\ref{fig:workbench} shows what that visibility amounts to in practice. The
answer path cannot traverse \EMB{}, but the inspection path is built to make it
the first thing an operator sees, because the ratio between the derived and the
extracted layer is the single number that determines whether a graph built this
way is worth querying.

\subsection{Reproducibility}

Every arithmetic claim in Section~\ref{sec:idw} --- the threshold floor of
Proposition~\ref{prop:floor}, the union semantics of the cap, the per-chunk-pair
deduplication --- is covered by tests over the pure module, with no services
required. The end-to-end pass is exercised against a fake store and a fake
adapter using the deterministic offline embedder, so the seven stages run
identically on a machine with no credentials. Integration tests against live
engines skip rather than fail when the engine is unreachable.

%% file: sections/08-results.tex
\section{Results}
\label{sec:results}

We report three kinds of result: what the pass does to the running example, how
the design decisions of Sections~\ref{sec:embedding}--\ref{sec:impl} measure, and
the default configuration that follows from them. Every number below was taken
against the running system; none is estimated.

\subsection{The running example}

Figures~\ref{fig:extract} and~\ref{fig:connect} are the same three paragraphs
--- one document each, so three chunks --- before and after the pass. Extraction
produces six entities across two components and four stated edges, each
corresponding to a sentence. The two components are $\{$\code{Gulgun},
\code{FSI}$\}$ and $\{$\code{Kaan}, \code{Tan}, \code{BabelStreet},
\code{Bloomberg}$\}$: nothing any of the three paragraphs says connects them.
The similarity pass adds four \EMB{} edges and \emph{no} nodes, joining the
components into one graph.

\begin{table}[t]
\centering
\caption{The running example, before and after. The pass is additive in the
strict sense: node count and stated-edge count are unchanged.}
\label{tab:example}
\begin{tabular}{lrr}
\toprule
 & After extraction & After similarity pass \\
\midrule
Documents                 & $3$  & $3$ \\
Chunks embedded           & $3$  & $3$ \\
Entities (nodes)          & $6$  & $6$ \\
Stated edges              & $4$  & $4$ \\
\EMB{} edges              & $0$  & $4$ \\
Connected components      & $2$  & $1$ \\
\bottomrule
\end{tabular}
\end{table}

The four derived edges carry weights in $[0.72, 0.84]$, all above the default
threshold of $0.7$ and all well above the floor of $0.582$ that
Proposition~\ref{prop:floor} gives for $\beta = 0.35$. The strongest,
\code{Gulgun}\,--\,\code{Kaan} at $0.84$, is the tie the corpus most clearly
implies and no sentence states.

Figures~\ref{fig:extract} and~\ref{fig:connect} are schematics: they show the
tie the pass is for, reduced to the entities a reader can hold at once.
Figure~\ref{fig:run} is what the system actually returns when those same three
paragraphs are fed to it. The extractor is considerably less parsimonious than
the schematic --- $23$ entities and $24$ stated relations across $13$ node
labels, not six entities and four edges --- and the similarity pass adds $86$
\EMB{} edges, taking the graph to $110$. The ratio is the point worth noting:
nearly four in five edges are derived. Three paragraphs that are all about one
family, with every chunk similar to every other, are close to a worst case for
a chunk-similarity signal, and the figure should be read as showing the
mechanism at its least selective rather than as a typical density.

\begin{figure}[!htb]
  \centering
  \includegraphics[width=0.52\textwidth]{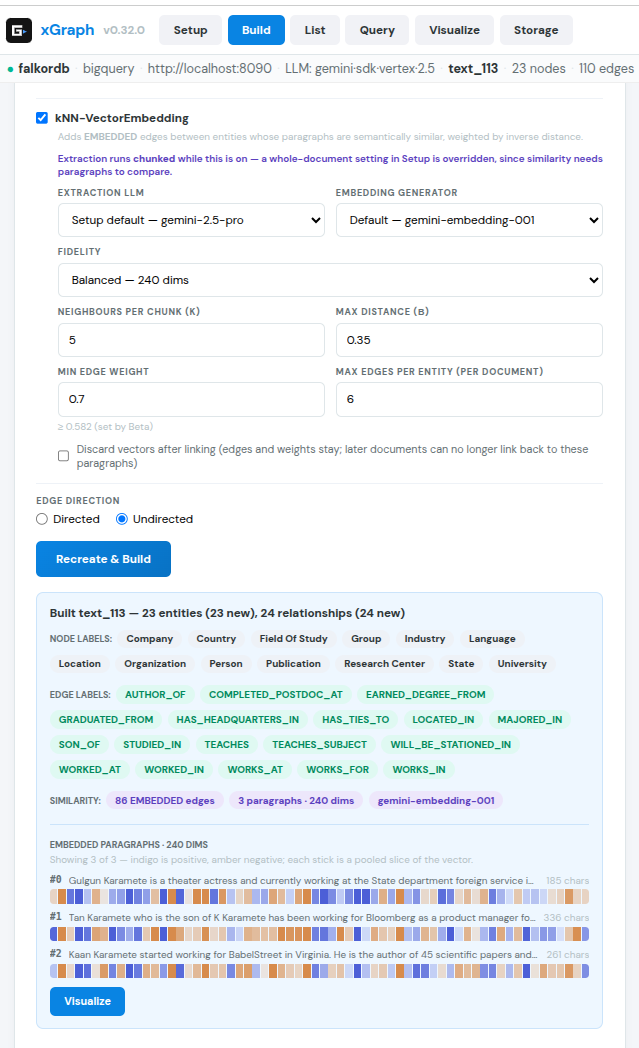}
  \caption{The pass run on the three paragraphs of
  Figures~\ref{fig:extract}--\ref{fig:connect}. The header records the outcome:
  $23$ entities, $24$ relationships, $86$ \EMB{} edges, $110$ edges in total, at
  $240$ dimensions. The hint under the threshold field --- \emph{$\geq 0.582$
  (set by Beta)} --- is Proposition~\ref{prop:floor} evaluated live for the
  configured $\beta$, so an operator cannot set a threshold the gate has already
  made inert. Two settings here are below their defaults: $k = 5$ rather than
  $20$, and a per-endpoint cap of $6$ rather than $10$, which is why the derived
  layer is as dense as it is on so few nodes. The coloured strips are the three
  chunk vectors, mean-pooled to $64$ buckets for display.}
  \label{fig:run}
\end{figure}

\subsubsection*{What the threshold does to it}

Section~\ref{subsec:sweep} sweeps $\theta$ over the full corpus, where the
populations are large enough to be statistical and no individual edge can be
followed. The same question put to these three paragraphs has an answer that can
be checked by hand, and Figure~\ref{fig:example-theta} is it: one graph, drawn
three times, at three thresholds.

The run behind that figure is a fresh one. The graph in Figure~\ref{fig:run} was
an ad-hoc session and no longer exists, and extraction is not deterministic, so
re-running the same three paragraphs gives $19$ entities and $17$ stated
relations rather than $23$ and $24$. Everything else is held at
Figure~\ref{fig:run}'s configuration, including $k = 5$ and the per-endpoint cap
of $6$. The pass adds $124$ \EMB{} edges. Extraction alone leaves \emph{two}
connected components --- the property Figures~\ref{fig:extract}
and~\ref{fig:connect} are drawn to illustrate, here reproduced on a live run
rather than asserted.

\begin{table}[t]
\centering
\caption{Every accumulator in the running example, grouped by weight. Three
chunks admit exactly three cross-chunk pairs, so a node pair supported by one of
them can take only one of three values; the fourth group is the two-term
combination for an entity named in two paragraphs, and the fifth is same-chunk
co-mention. ``Implied $\cos$'' inverts the value term, $\cos = 1 - 2(1-w)^2$,
which is exact for a single-term pair because its weight \emph{is} its value
term. The first group never becomes an edge: it is below $\theta$, and is kept
un-gated against a document that has not arrived.}
\label{tab:example-weights}
\begin{tabular}{rrrrl}
\toprule
Weight & Support & Pairs & Implied $\cos$ & Evidence \\
\midrule
$0.6958$ & $1$   & $24$ & $0.8149$ & paragraphs $1 \times 2$ --- \emph{below $\theta$} \\
$0.7009$ & $1$   & $32$ & $0.8211$ & paragraphs $1 \times 3$ \\
$0.7171$ & $2$   & $8$  & ---      & both of the above \\
$0.7304$ & $1$   & $48$ & $0.8546$ & paragraphs $2 \times 3$ \\
$1.0000$ & $1,2$ & $59$ & $1.0000$ & same paragraph (co-mention) \\
\bottomrule
\end{tabular}
\end{table}

Table~\ref{tab:example-weights} is the whole derived layer, and it has only five
values in it. That is not a simplification: with three chunks there are three
cross-chunk pairs, so a node pair whose endpoints co-occur in one of them can
carry only that pair's value term. The table also shows both gates working
separately. Twenty-four pairs sit at $0.6958$, below $\theta$, and are held in
the accumulator rather than written --- the latent layer of
Section~\ref{sec:incremental}, at a size one can count. Of the $147$ rows that
\emph{do} clear $\theta$, only $124$ became edges; the per-endpoint cap of $6$
removed the other $23$.

\begin{figure}[t]
\centering
\begin{subfigure}{0.32\textwidth}
  \includegraphics[width=\textwidth]{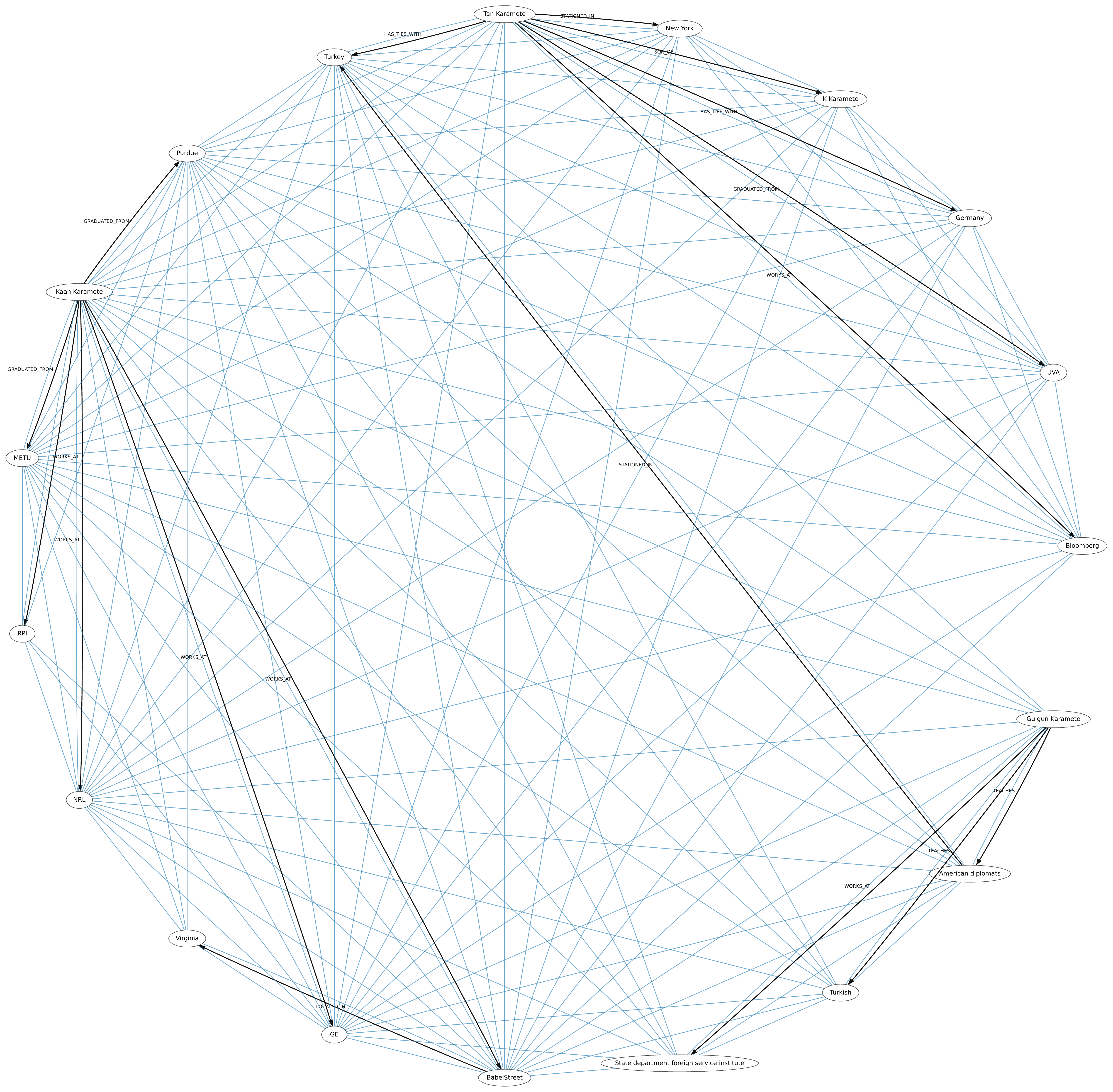}
  \caption{$\theta = 0.70$ (the run's own setting): $124$ derived edges, one
  component.}
  \label{fig:example-theta-a}
\end{subfigure}
\hfill
\begin{subfigure}{0.32\textwidth}
  \includegraphics[width=\textwidth]{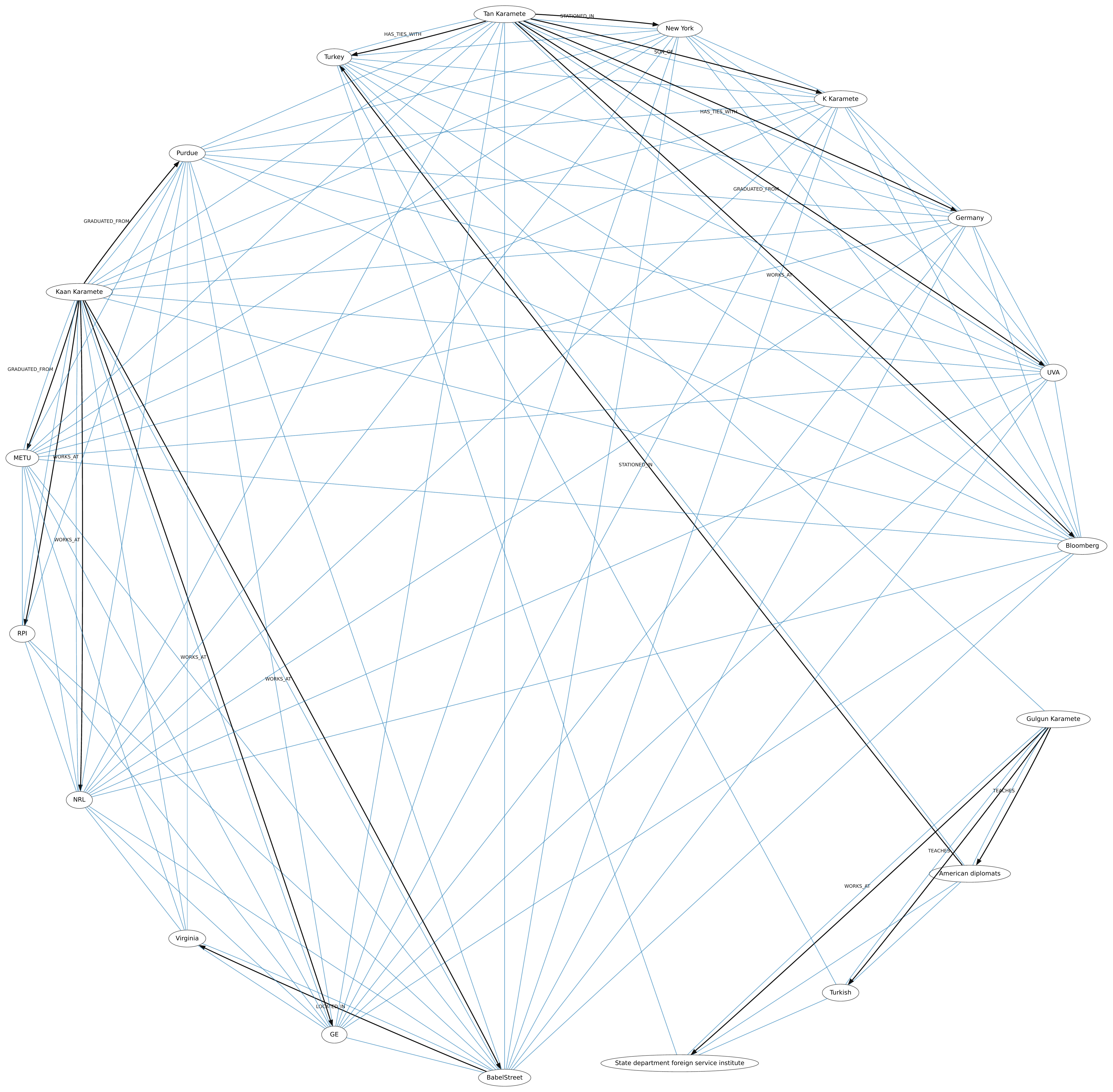}
  \caption{$\theta = 0.72$: $94$ edges. Thinner, but still one component.}
  \label{fig:example-theta-b}
\end{subfigure}
\hfill
\begin{subfigure}{0.32\textwidth}
  \includegraphics[width=\textwidth]{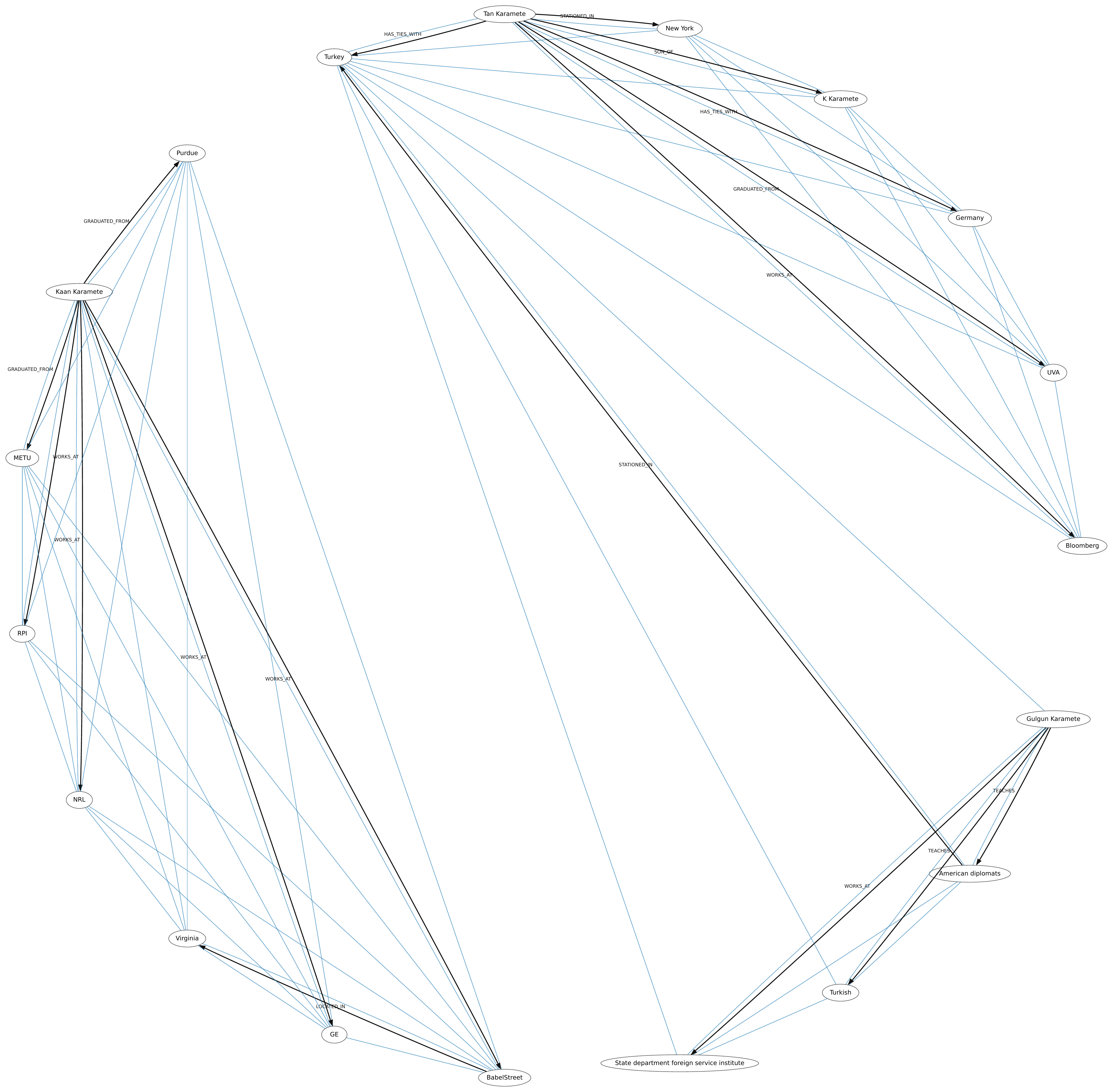}
  \caption{$\theta = 0.74$: $58$ edges, and \emph{two} components again.}
  \label{fig:example-theta-c}
\end{subfigure}
\caption{The running example at three thresholds. Black arrows are the $17$
relations extraction stated and are the same in all three panels; blue arcs are
the derived layer. Sub-threshold edges are drawn invisibly rather than removed,
so the layout is identical in every panel and the reader is comparing thresholds
rather than drawings --- the three images have the same dimensions to the pixel.
The panels are vector art; node names are legible on zoom. The interesting
transition is the third: at $\theta = 0.74$ every cross-paragraph weight of
Table~\ref{tab:example-weights} has been excluded, the $58$ survivors are all
same-paragraph co-mention, and the graph is back to the two components
extraction produced. Tightening the threshold does not merely thin the derived
layer here; past $0.7304$ it switches the pass off.}
\label{fig:example-theta}
\end{figure}

The sweep follows directly. At $\theta = 0.70$ all $124$ edges stand and the
graph is connected. At $0.72$ the $0.7009$ group is excluded and $94$ remain,
still connected, because the strongest cross-paragraph evidence is untouched. At
$0.74$ the last cross-paragraph group goes and the count falls to $58$ --- all of
it same-paragraph co-mention, which by construction cannot connect anything two
paragraphs did not already share --- so the graph separates into the two
components it started with. Between $0.74$ and $1$ nothing further happens. This
is the corpus-scale shape of Section~\ref{subsec:sweep} in miniature: a short
band where $\theta$ decides everything, and a long flat stretch above it where
there is simply no population left for a threshold to act on.

The practical reading is that $\theta$ is not a dial for adjusting how busy a
picture looks. On this corpus every value that matters lies between $0.69$ and
$0.74$, and moving through that band is the difference between the pass making
its claim and not making it at all.

We present all of this as an illustration of the mechanism, not as evidence of
accuracy. Three paragraphs cannot support a claim about precision, and
Section~\ref{sec:limitations} says what would.

\subsection{Embedding width}

Table~\ref{tab:dims} in Section~\ref{sec:embedding} gives the agreement study:
$92\%$ of emitted edges agree with the $3072$-dimensional reference at $768$
dimensions, and $72\%$ at $240$ dimensions, at $960$ bytes per chunk against
$3{,}072$ and $12{,}288$. Because the model is Matryoshka, the narrower vector
is a prefix of the wider one, so this is a pure storage-versus-stability trade
with no retraining and no second embedding pass. The shipped default is $240$;
the vector table is the only structure in the pass that grows without bound with
the corpus, and the layer it feeds is advisory.

\subsection{Query formulation}

Table~\ref{tab:perf} collects the two measurements that changed the
implementation. Both are formulation effects rather than tuning effects: the
same logical operation, expressed differently, at the same scale, and explained
in Section~\ref{subsec:maxby}.

\begin{table}[t]
\centering
\caption{Two formulation results. The top-$k$ measurement is at $768$
dimensions over $100$k rows; the upsert measurement is a $19{,}900$-pair merge.
In both cases the arithmetic performed is identical between the two rows and
only its expression differs.}
\label{tab:perf}
\begin{tabular}{llrr}
\toprule
Stage & Formulation & Time & Ratio \\
\midrule
\multirow{2}{*}{$k$-NN top-$k$}
  & \code{QUALIFY row\_number() OVER (...)} & $35{,}819$\,ms & $25.5\times$ \\
  & \code{max\_by(struct, similarity, k)}   & $1{,}403$\,ms  & $1\times$ \\
\addlinespace
\multirow{2}{*}{accumulate upsert}
  & per-row prepared statement ($\approx\!4$\,ms/row) & $82{,}300$\,ms & --- \\
  & one multi-row \code{VALUES} upsert         & one statement  & --- \\
\bottomrule
\end{tabular}
\end{table}

\subsection{The value term, quantified}

Table~\ref{tab:valueterms} in Section~\ref{sec:idw} gives the two value terms
across the gated range; here is what the change bought in operation. Under the
affine term the gate confined every weight to $[0.825, 1]$, and the threshold
shipped at $0.5$ --- so no setting below $0.825$ would have filtered anything.
Under the chord term the same gate gives a floor of $0.582$, the threshold was
moved to $0.7$, and it now sits above the floor and below the values the strong
cases produce.

Table~\ref{tab:spread} quantifies the rejected alternative: moving the $L_2$ term
into the weight reduces the influence ratio between the closest admissible
neighbour and one at the gate boundary from $350\times$ to $19\times$.

\begin{table}[t]
\centering
\caption{Influence ratio between a very close neighbour ($d = 10^{-3}$) and one
at the gate boundary ($d = \beta = 0.35$), under the shipped weight and under the
Shepard $p=1$ alternative. Inverse-distance weighting exists to make near
evidence dominate; the $p=1$ form gives away more than an order of magnitude of
that dynamic range, and leaves the $\cos = 1$ spike untouched because that spike
is set by $\varepsilon$.}
\label{tab:spread}
\begin{tabular}{lrrr}
\toprule
Weight & $w$ at $d = 10^{-3}$ & $w$ at $d = 0.35$ & Spread \\
\midrule
$w = 1/(\varepsilon + d)$ \quad (Shepard $p=2$, shipped)
  & $999.0$  & $2.86$ & $350\times$ \\
$w' = 1/(\varepsilon + \lVert a-b\rVert)$ \quad (Shepard $p=1$)
  & $22.4$   & $1.20$ & $19\times$ \\
\bottomrule
\end{tabular}
\end{table}

\subsection{Resulting defaults}

Table~\ref{tab:defaults} is the configuration these measurements produce. Every
numeric field is clamped to its stated range rather than rejected; only a
width-or-model conflict against a graph's pinned value raises.

\begin{table}[t]
\centering
\caption{Default configuration and the reason for each value.}
\label{tab:defaults}
\begin{tabular}{llll}
\toprule
Parameter & Default & Range & Rationale \\
\midrule
width          & $240$   & $\{240, 768\}$ & $72\%$ agreement at $0.078\times$ storage \\
$k$            & $20$    & $[1, 200]$     & neighbours per chunk \\
$\beta$        & $0.35$  & $[0, 2]$       & max cosine distance admitted \\
$\theta$       & $0.70$  & $[0, 1]$       & above the $0.582$ floor (Prop.~\ref{prop:floor}) \\
$n$ (cap)      & $10$    & $[1, 1000]$    & per endpoint, union semantics \\
$\varepsilon$  & $10^{-6}$ & fixed        & sets same-chunk dominance \\
keep vectors   & yes     & --- & required to extend the graph cheaply \\
\bottomrule
\end{tabular}
\end{table}

%% file: sections/09-scale.tex
\section{A corpus-scale test on public threat intelligence}
\label{sec:scale}

Three paragraphs show what the pass does; they show nothing about how it behaves
as a corpus grows. Three claims need scale: that the derived layer finds ties
\emph{across} documents rather than restating co-mentions within one, that the
un-gated accumulators leave a latent layer a later document can promote, and
that the ontology stabilises rather than inflating without bound. This section
reports a run over public cyber-threat-intelligence feeds --- $856$ documents
from $43$ live sources, of which $246$ were ingested in an eight-hour run. Every
number below is computed from the run's own per-document log by one script,
\code{tools/tabulate\_run.py}; none is transcribed by hand.

The domain was not chosen for convenience. Threat reporting is written by many
independent publishers about a shared, fast-moving set of actors, malware
families, vulnerabilities and victims: a national CERT, a vendor research team
and a news outlet describe the same campaign within days, in different words,
none citing the others. That is exactly the situation the pass is for --- the tie
between two documents is real, no single document states it, and it becomes
assertable only once both have arrived.

\subsection{Corpus construction}
\label{subsec:corpus}

Documents were built from public RSS and Atom feeds by
\code{tools/fetch\_threat\_feeds.py}. The fetch policy is part of the method, not
an implementation detail. The tool reads \emph{only} the feed documents and never
follows an article link, so the whole corpus costs about fifty HTTP requests and
no question arises about whether retrieving an article body is within a site's
terms. It does not disable TLS verification and does not retry a $403$ behind a
different \code{User-Agent}; a feed that declines is dropped and named in the
output, so a corpus is never quietly smaller than it claims. Four feeds were
dropped this way (an expired certificate, two $403$s and one malformed XML
document), leaving $43$ live sources.

The consequence for the experiment is that a document is a feed summary, not an
article. Summaries are short --- a median of $395$ characters --- so the
paragraph count per document is bimodal: $403$ documents are a single paragraph
and hence a single chunk, while $318$ hit the six-paragraph cap the fetcher
applies. A single-chunk document contributes \emph{no} same-chunk pairs beyond
its own entities' co-mention and must find its neighbours elsewhere in the
graph, which if anything makes the cross-document tier harder to reach, not
easier.

The $246$ documents actually ingested are a prefix of the corpus in feed order,
not a sample drawn from it, and they are not representative of it: their median
length is $693$ characters against the corpus median of $395$, and $133$ of them
sit at the paragraph cap against $65$ at a single paragraph --- the reverse of
the corpus-wide ratio. The direction of that skew matters for how the results
should be read. Longer documents produce more chunks, more chunks produce more
same-chunk pairs, and same-chunk pairs are the \emph{easy} tier. A prefix skewed
long therefore biases the tier split \emph{toward} the within-document tier and
against the cross-document one, so the cross-document share reported in
Table~\ref{tab:tiers} is if anything an underestimate of what the same pipeline
would produce on the whole corpus.

\begin{table}[t]
\centering
\caption{The corpus. Feeds were selected across three categories before the run
and the selection was balanced round-robin so that no single high-volume
publisher dominates; the residual imbalance toward vendor blogs reflects how many
of them publish feeds, not a sampling choice made after seeing results. The last
column is how many of each category the run reached; the shortfall is
budget, not failure, and the seven documents that did fail are accounted for in
Section~\ref{subsec:foldcost}.}
\label{tab:corpus}
\begin{tabular}{lrrrrr}
\toprule
Category & Sources & Documents & Paragraphs & Median chars & Ingested \\
\midrule
Government / CERT     & $3$  & $88$  & $489$  & $316$ & $21$ \\
Vendor research       & $26$ & $573$ & $1{,}636$ & $348$ & $154$ \\
News / aggregators    & $14$ & $195$ & $509$  & $612$ & $71$ \\
\midrule
All                   & $43$ & $856$ & $2{,}634$ & $395$ & $246$ \\
\bottomrule
\end{tabular}
\end{table}

Each document is a JSON record carrying its text, source, category, URL and
paragraph count, and the gateway derives a stable identifier from the text
itself --- \code{doc\_uri = "text:"} followed by the first twelve hex digits of
its SHA-256 --- so the vector store joins back to the corpus offline, without
re-fetching anything and without a side table that can drift.

\subsection{Ingestion, and the stage that actually costs}
\label{subsec:foldcost}

Each document is one \code{/extract} call with the similarity options attached:
$240$ dimensions, $k = 20$, $\beta = 0.35$, $\theta = 0.7$, cap $10$ --- the
shipped defaults of Table~\ref{tab:defaults}. Extraction and folding ran on
Gemini~2.5~Flash. Documents were driven concurrently, which the accumulator
algebra makes safe rather than merely convenient. Within one document the pass
persists its chunks \emph{before} it queries for neighbours. So for two documents
$A$ and $B$ to miss each other, $A$ would have to query before $B$ persisted
\emph{and} $B$ before $A$ persisted, which is a cycle; whichever queries second
sees the other. Concurrency changes wall-clock time and not the result.

Six documents were in flight at a time. The run reached $246$ documents ---
$974$ paragraphs, $7{,}672$ entities, $7{,}787$ relations --- in $8.0$ hours of
wall clock, and $7$ documents failed. Per-document cost is quoted as a
\emph{median}, $91.1$\,s, because the mean is not a useful summary here: it is
$584.6$\,s, six times the median, against a ninetieth percentile of $182$\,s and
one stalled provider request that ran for $6.7$ hours. Summed request time is
$46.9$ hours, so six-way concurrency returned an effective factor of $5.9$. The
failures are worth itemising, because none is a failure of the pass and two are a
lesson. Four were a DuckDB lock conflict: the metadata store is a single file, a
test run happened to open it, and the colliding documents were lost. One was a
read timeout. The last two were a payload shape --- the extractor is asked for
facets as \code{\{name, axis\}} objects and roughly one document in forty returns
a bare list of strings, which every downstream reader then indexed as a
dictionary. A failed document writes no ledger row and no vectors, so the graph
stays consistent and the document can be re-driven; but with a generative
extractor the schema is a request, not a guarantee, and consumers have to coerce
rather than assume.

Running the corpus surfaced a scaling property of the surrounding pipeline that
is worth reporting because it is counter-intuitive and because it is not in the
similarity pass at all. Per-document time climbed from $11.9$\,s on the first
document to $80.4$\,s by the twelfth, on documents of the same size. What grows
with the corpus is the \emph{ontology}, and the only stage that reads the whole
ontology is label folding: each previously-unseen label costs one LLM round-trip
that asks whether it is a synonym of an existing canonical, and those round-trips
were issued one at a time. Profiling a six-paragraph document against an
established ontology gave extraction $72$\,s, folding $387$\,s, and the entire
similarity pass --- embed, $k$-NN, expand, accumulate, select, link --- $3$\,s.
Folding was $84\%$ of the wall clock. The control that attributes it is the same
document ingested into an \emph{empty} graph, where folding takes $2.0$\,s
because with no canonicals to fold into every check short-circuits before it
reaches the model.

\begin{table}[t]
\centering
\caption{Where an extraction's time goes, on one six-paragraph document at $240$
dimensions. The middle column is the same document into an empty graph, which is
what identifies the cost: nothing about the document changed, only the size of
the ontology it is folded against. The right column is after fanning the
independent fold-checks out across eight workers and sending the document's
paragraphs to the extractor in parallel.}
\label{tab:foldcost}
\begin{tabular}{lrrr}
\toprule
Stage & Established ontology & Empty graph & After fan-out \\
\midrule
Extraction           & $72.0$\,s  & $112.5$\,s$^{\dagger}$ & $21$\,s \\
Label folding        & $387.0$\,s & $2.0$\,s          & $49$\,s \\
Similarity pass      & $3.2$\,s   & $3.0$\,s          & $3$\,s \\
\midrule
Total                & $463.2$\,s & $117.5$\,s        & $73.3$\,s \\
\bottomrule
\end{tabular}

\smallskip
{\footnotesize $^{\dagger}$ by difference; only the folding stage and the total
were instrumented in the empty-graph control.}
\end{table}

The fix is fan-out, not caching: the fold-checks for distinct labels are
independent --- each reads a snapshot of the canonical set taken before the walk
and writes nothing --- so they are issued concurrently, while every write back to
the metadata store stays on the calling thread in the original walk order, so the
store sees exactly the sequence it saw before. That is a $6.3\times$
improvement end to end, and it is reported here for two reasons. First, it is the
honest accounting of what a document costs in this pipeline: the pass this
article is about is $3$\,s of a $73$\,s document, and a reader deciding whether
to adopt it should know that it is not the expensive part. Second, it is a
property of \emph{ontology folding} specifically --- the cost grows with what has
already been learned --- and any system that maintains a canonical vocabulary
across a corpus will meet it.

\subsection{The ontology as it is learned}
\label{subsec:ontology-evolution}

Nothing in the pipeline is given a schema. The first document arrives against an
empty graph, the extractor proposes whatever types its text seems to need, and
every document after that is folded against what the previous ones left behind.
The ontology in Figure~\ref{fig:ontology-early} is therefore not a design; it is
a residue.

\begin{figure}[p]
\centering
\begin{subfigure}{0.66\textwidth}
  \centering
  \includegraphics[width=\textwidth]{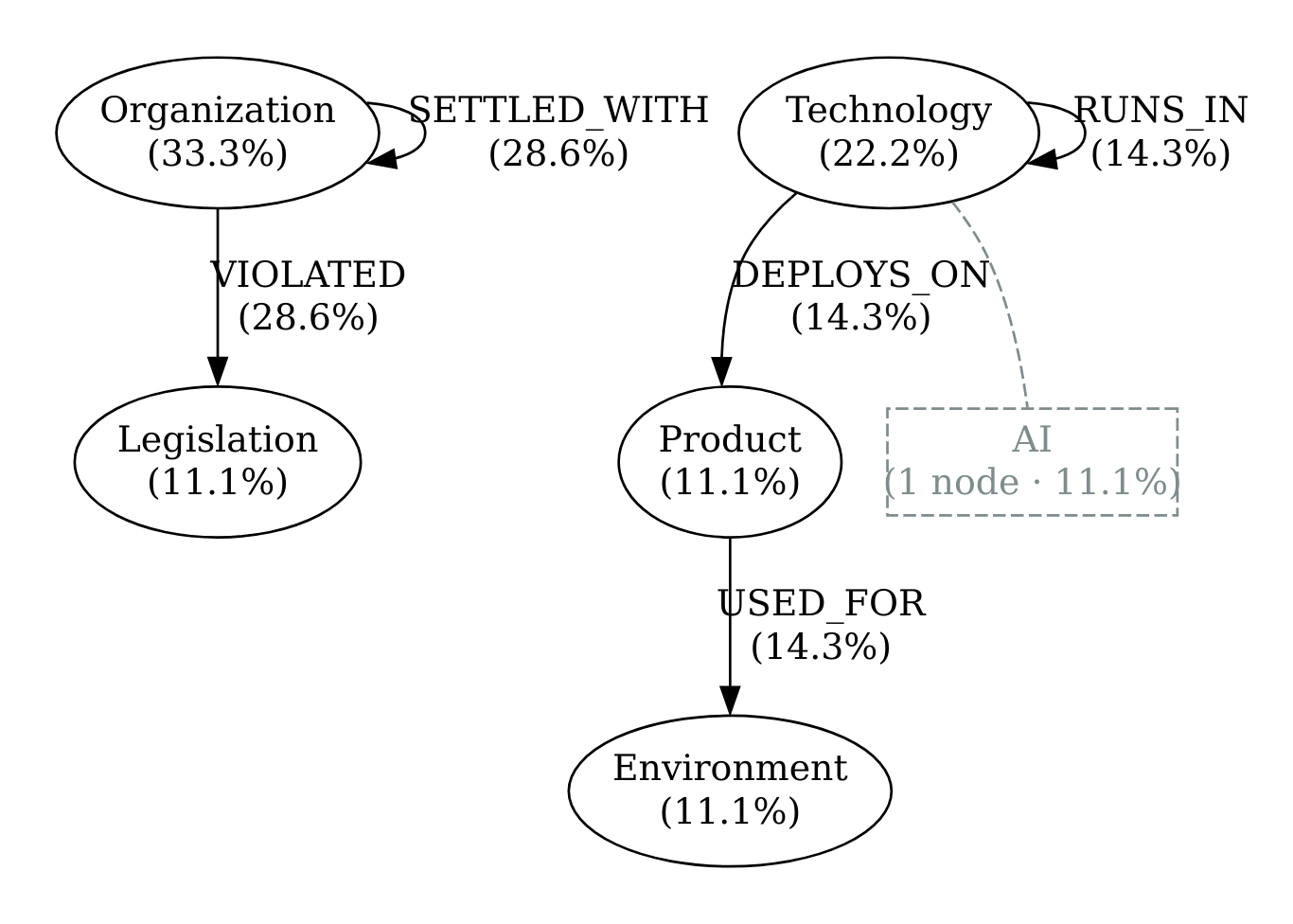}
  \caption{After one document: $5$ types, $5$ relation types, $5$ arrows over
  $7$ edges.}
  \label{fig:ontology-1}
\end{subfigure}

\vspace{10pt}
\begin{subfigure}{\textwidth}
  \centering
  \includegraphics[width=\textwidth]{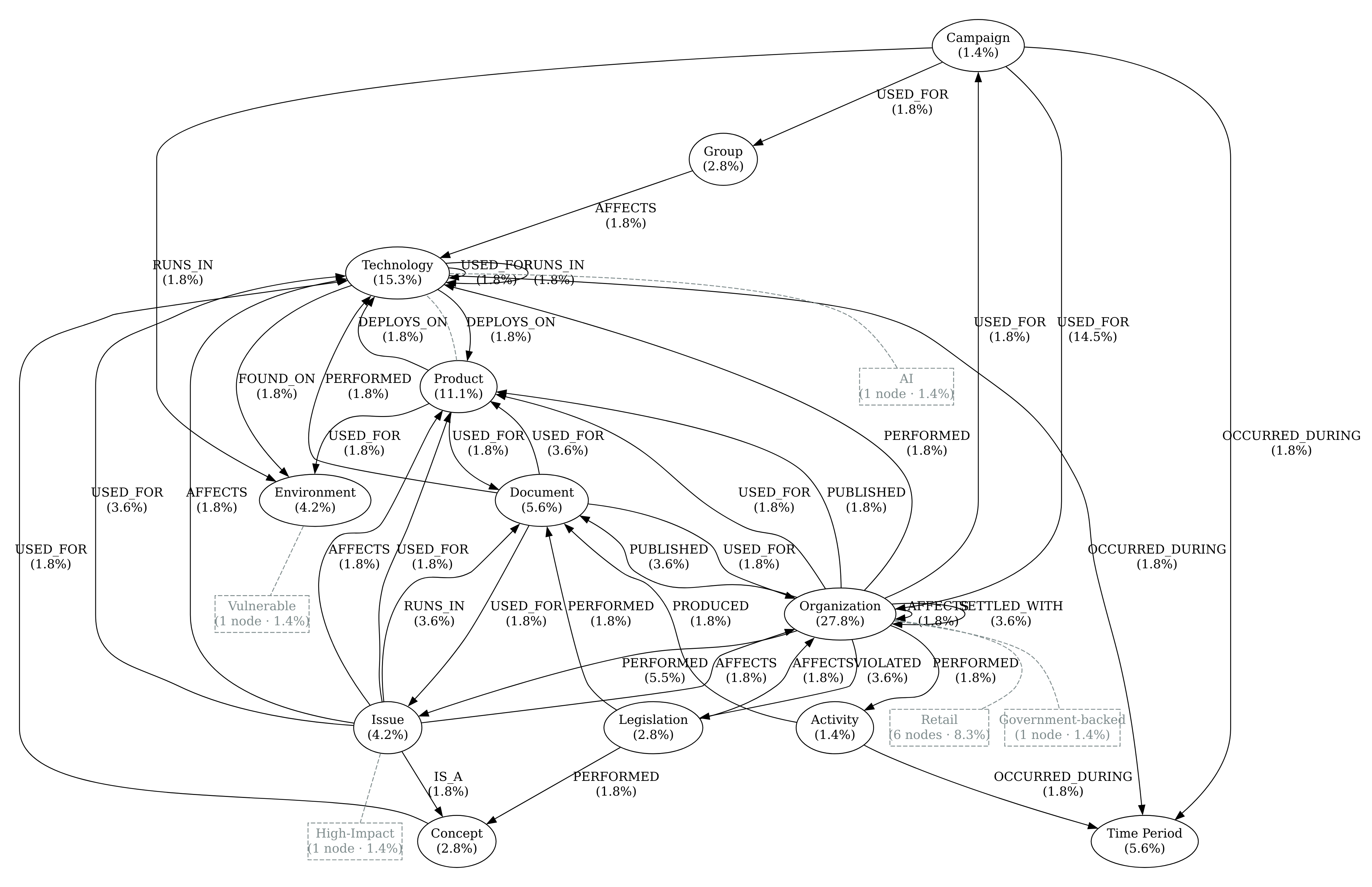}
  \caption{After five: $13$ types, $12$ relation types, $40$ arrows over $55$
  edges. Nothing from (a) was discarded.}
  \label{fig:ontology-5}
\end{subfigure}
\caption{The extracted ontology after one and after five documents, rendered by
the workbench itself --- the gateway's own schema-DOT builder laid out by the
same \code{@hpcc-js/wasm} Graphviz build the browser loads. Solid ellipses are
structural types and carry their share of all nodes; solid arrows are relation
types and carry that arrow's share of all edges. Dashed boxes are \emph{facets}
--- a dimension \emph{of} a type rather than a relationship between two, so
\code{Social Media} hangs off \code{Organization} instead of being a second
organization-like thing next to it. Both panels are reconstructed after the fact
from the \code{first\_seen\_ts} every node and edge carries; the run itself was
not instrumented.}
\label{fig:ontology-early}
\end{figure}

Two things in the five-document panel are worth naming, because they are the
mechanism the rest of this section measures at scale. The first is that
\code{Organization} is already a hub at $27.8\%$ of nodes while \code{Campaign}
and \code{Activity}, at $1.4\%$ each, are still one-node types waiting to see
whether anything else in the corpus needs them. A type's share is not a
property of the type; it is a property of what has arrived so far, and the
long tail here is a queue, not a residue.

The second is what is \emph{not} in the panel. There is one \code{Product} and
one \code{Technology}, not the pairs \code{Product}/\code{Software Product} and
\code{Technology}/\code{TechnologyConcept} an earlier run of this same corpus
produced at the same point. Those near-synonyms were not a delay that more
documents would have resolved. We ran that corpus far enough to establish they
were permanent: it ended at $903$ node and $1{,}315$ edge labels, too large for
the workbench's own Graphviz build to lay out. Four causes were at work, and the
decisive one was the cheapest to fix. The fold was being asked to merge names
\emph{after} the fact, when the extractor could have been told the existing
vocabulary and asked to reuse it in the first place. Reuse decided in the
extraction prompt is free; the same decision made afterwards costs one model
round-trip per label and landed less than half the time. The vocabulary is passed
as a strong preference rather than a closed enum, because document $1$ has no
vocabulary and document $500$ must still be able to name a genuinely new kind.
We record the earlier failure rather than quietly re-running, because ``the
ontology converges'' is only worth anything alongside the configuration under
which it did not.

\subsection{The same ontology with the derived layer put back}
\label{subsec:derived-schema}

Both panels of Figure~\ref{fig:ontology-early} are the \emph{extracted} ontology:
every arrow in them is a relation some document states, and \EMB{} is filtered
out. That is the right default for a figure about what extraction produces ---
drawing the derived layer into it would credit the extractor with something it
never found --- but it means the layer this article is about does not appear in
either picture. Figure~\ref{fig:ontology-embedded} is the five-document panel
with the filter removed.

\begin{figure}[t]
\centering
\includegraphics[width=\textwidth]{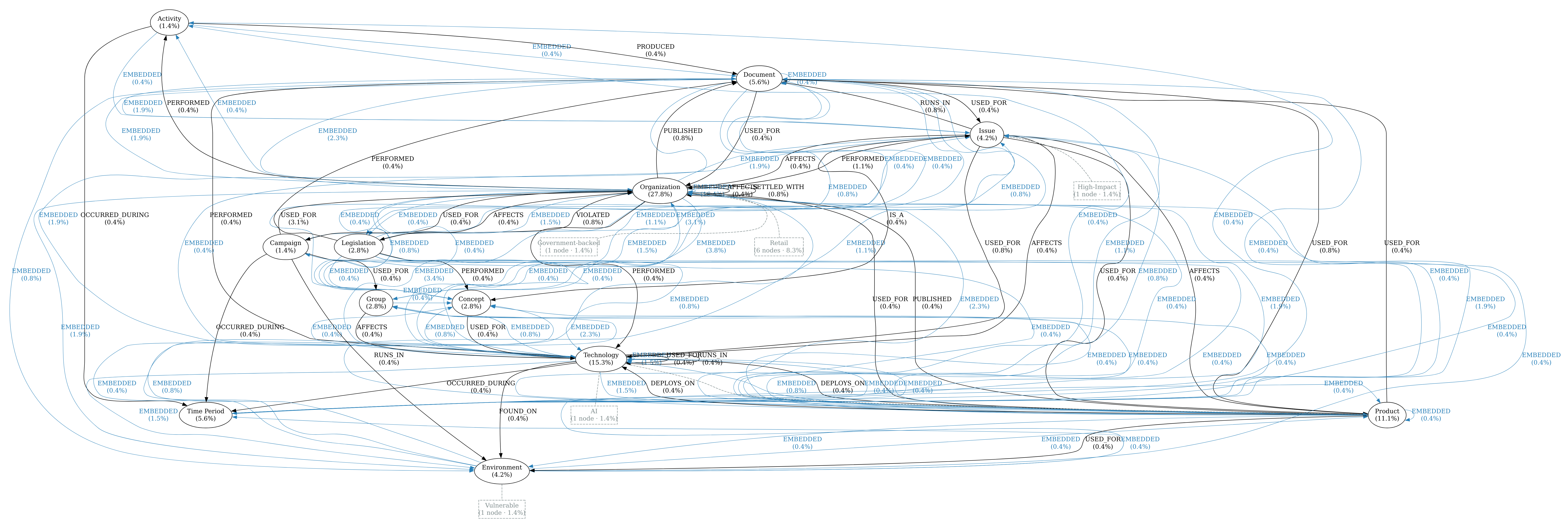}
\caption{The five-document ontology of Figure~\ref{fig:ontology-5} with the
derived layer put back, at the same cutoff and from the same builder. Blue
arrows are \EMB{}; black arrows and ellipses are exactly the extracted ontology
of the earlier panel. The structural vocabulary does not move --- the pass mints
no types --- but the schema does: $12$ relation types over $40$ arrows and $55$
edges become $13$ over $107$ and $262$, so the similarity layer contributes $67$
arrows and $207$ edges against extraction's $55$. Nearly four derived edges for
every stated one, after five documents. The derived arrows are drawn off
Graphviz's constraint graph, so the black skeleton keeps the layout it has in
Figure~\ref{fig:ontology-5} and the two can be read against each other; letting
$67$ arrows participate in rank assignment buries the skeleton the reader came
for. Parameters are the run's throughout: $\beta = 0.35$, $\theta = 0.7$,
$k = 20$, $n = 10$, $240$ dimensions. Two cautions come with the density. At type
level the layer approaches saturation, so what makes it useful is not which type
pairs it touches but how strongly --- and a schema view deliberately discards
exactly that. And at \emph{this} cutoff the blue arrows are not graded at all:
every one of the $207$ edges has weight $\geq 0.9997$, because five documents have
produced no cross-document evidence yet to dilute a co-mention. No threshold below
$1$ removes any of them. Section~\ref{subsec:sweep} sweeps $\theta$ over the full
corpus, where the two levels do come apart.}
\label{fig:ontology-embedded}
\end{figure}

\subsection{What the vocabulary did over 246 documents}
\label{subsec:vocab}

Table~\ref{tab:vocab} is the cumulative count of distinct canonical types as
documents arrive. The structural vocabulary saturates: the twenty-fourth entity
type is reached at document $246$ and the rate over the last $46$ documents is
$0.04$ new types per document, against $1.33$ over the first ten. No fold in the
run was \emph{forced} --- the hard cap of $100$ canonicals per kind, at which the
fold-check is re-asked as a choice with no null option, was never approached.

\begin{table}[t]
\centering
\caption{Cumulative distinct canonical types against documents ingested, counted
from the driver's per-document log --- a type enters at the first document
\emph{labelled} with it. The right-hand column is the marginal rate over the
interval ending at that row, which is the quantity that decides whether an
ontology converges; a count that merely grows more slowly than the corpus still
grows without bound. Figure~\ref{fig:growth} counts the same vocabulary from the
ledger instead, at the moment a canonical is \emph{minted}, and ends four lower.}
\label{tab:vocab}
\begin{tabular}{rrrr}
\toprule
Documents & Entity types & Relation types & New entity types / doc \\
\midrule
$1$   & $3$  & $3$  & $3.00$ \\
$10$  & $15$ & $16$ & $1.33$ \\
$25$  & $17$ & $17$ & $0.13$ \\
$50$  & $17$ & $19$ & $0.00$ \\
$100$ & $19$ & $24$ & $0.04$ \\
$200$ & $22$ & $39$ & $0.03$ \\
$246$ & $24$ & $43$ & $0.04$ \\
\bottomrule
\end{tabular}
\end{table}

\begin{figure}[t]
\centering
\includegraphics[width=\textwidth]{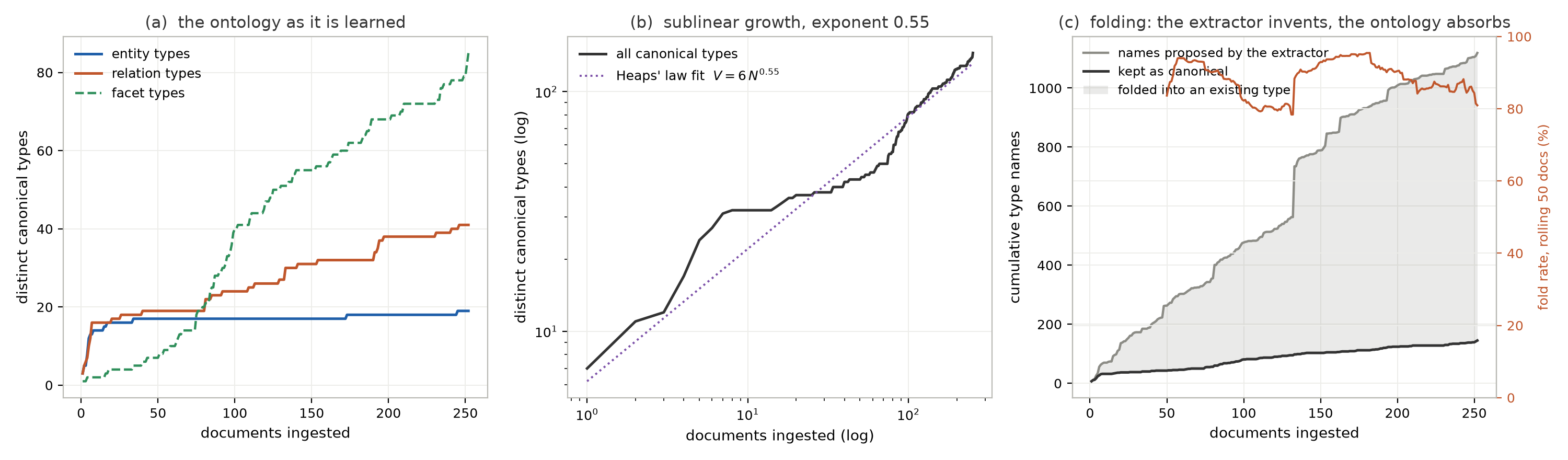}
\caption{The ontology over the run, reconstructed from the \code{first\_seen\_ts}
each canonical carries; the run itself was not instrumented. (a) Entity types
flatten at $19$ from about document $80$ and do not move again, relation types
climb more slowly to $41$, and \emph{facet} names --- a dimension of a type
rather than a type --- rise to $85$ and are still rising at the end of the run.
That divergence is the section's main negative result. (b) The same total against
a Heaps' law fit; vocabulary growth in natural language is sublinear, and an
exponent of $0.55$ says an extracted ontology behaves the same way. (c) What the
fold is doing: the extractor proposed $1{,}119$ names of which $974$ ($87.0\%$)
were absorbed into an existing canonical, and the rolling fold rate stays between
$80\%$ and $95\%$ rather than declining --- an ontology that is saturating, since
an increasing share of what the extractor invents is something it already has a
word for.}
\label{fig:growth}
\end{figure}

That is the claim the section set out to test, and Figure~\ref{fig:growth}(a)
and (b) support it --- for structural types. Panel (c) is the mechanism behind
the flat line: nearly nine names in ten that the extractor proposes are absorbed
rather than minted, and $145$ canonicals survive out of $1{,}119$ proposals.
Table~\ref{tab:vocab} counts a type from the moment a document is labelled with
it and so reaches $24$, four more than the ledger's $19$; the difference is names
first minted on a facet axis and later used structurally, and the two counts
bracket the same flat curve.

The claim does \emph{not} hold for the graph's node labels, and the difference is
worth being precise about, because the honest number is the larger one. A node
carries a structural type \emph{and} its facets as labels, so the graph at the end
of the run has $101$ distinct node labels, not $19$. Facets fold onto an
\emph{axis}, but their names are not capped the way a type's are: the ledger holds
$104$ canonical entity names in all, of which $85$ were minted on a facet axis.
Some are real --- \code{Agentic}, \code{Ephemeral}, \code{Cloud-Hosted},
\code{Export-Controlled}. Others are adjectives lifted out of prose and are facets
of nothing: \code{Correct}, \code{Different}, \code{Final}, \code{Frustrating}.

Measuring this turned up one thing that \emph{was} a defect, and it is worth
separating from the rest because the two look alike and only one is fixable. A
facet folds onto an \emph{axis} --- the dimension it measures --- and the two
placeholders \code{EntityType} and \code{RelationType} mean \emph{no axis given},
which is to say the structural axis itself. They were deliberately never folded,
so that a label arriving without an axis could not be filed under whichever real
axis happened to rank closest. What that missed is that the extractor proposes an
axis literally named \code{Type}, and \code{TYPE} for relations, and those are the
same axis under another spelling. Unfolded, they minted a \emph{facet} axis
duplicating the structural role: $271$ of the ledger's $1{,}119$ type rows, $24\%$,
carried a structural axis under a facet name, and each name on it was then counted
as a facet of nothing. The fold now recognises them --- an axis is the structural
one when all of its word stems come from \{\code{entity}, \code{relation},
\code{node}, \code{edge}, \code{type}\} and one of them is \code{type}, so
\code{Type} and \code{EntityType} resolve while \code{Technology Type} stays a
facet dimension --- and the decision replays over an existing ledger by string
matching alone, with no re-extraction. Because an axis is a grouping and not a
label, no node label moves, and every count in this section is unchanged by the
repair.

Two things that look like the same defect are not, and are left alone.
\code{Impact} and \code{IMPACT} do coexist, but in two separate registries
following the graph's two naming conventions --- Title Case for entity labels,
\code{UPPER\_SNAKE} for relation types --- and within either registry no two
canonicals differ only by case. And $18$ canonicals sit on both the structural and
a facet axis, which is the domain rather than an error: \code{Retail} genuinely is
both a kind of thing and a dimension of one.

The uncapped facet vocabulary we report rather than fix, because the fix is an
ontology-design decision. The cap that disciplines types counts canonicals per
kind, and a facet is a different kind --- a dimension \emph{of} a type, of which
a domain may legitimately have many. Capping facet names would work and would
change what the ontology means. What the measurement establishes is narrower and
still useful: the mechanism that stabilises a vocabulary works where it is
applied, and where it is not --- the dashed curve in
Figure~\ref{fig:growth}(a) --- the vocabulary does not stabilise on its own.

\subsection{The derived layer at corpus scale}
\label{subsec:derived}

The run wrote $635{,}562$ \code{EMBEDDED} emissions, which resolve to $69{,}867$
distinct derived edges in a graph of $6{,}706$ nodes and $77{,}999$ edges. The
derived layer is thus $89.6\%$ of all edges --- the extracted graph is the small
part, which is what one should expect from a pass that considers every pair of
entities that share a neighbourhood rather than only pairs a sentence relates.

\begin{table}[t]
\centering
\caption{The derived layer by provenance tier. ``Written'' is edges above
$\theta = 0.7$; the two right-hand columns are the accumulator rows behind them,
which are persisted \emph{un-gated}. The bottom-right cell is the latent layer
this article argues for: $92{,}464$ cross-document pairs that a later document
can promote by raising their accumulated weight, with no recomputation of
anything already stored.}
\label{tab:tiers}
\begin{tabular}{lrrrrr}
\toprule
Tier & Written & Share & Median $w$ & Accum.\ $\ge\theta$ & Accum.\ $<\theta$ \\
\midrule
Same chunk     & $46{,}215$ & $66.1\%$ & $1.000$ & $154{,}022$   & $0$ \\
Same document  & $252$      & $0.4\%$  & $0.761$ & $65{,}010$    & $858$ \\
Cross document & $23{,}400$ & $33.5\%$ & $0.749$ & $1{,}205{,}441$ & $92{,}464$ \\
\midrule
All            & $69{,}867$ & & $1.000$ & $1{,}424{,}473$ & $93{,}322$ \\
\bottomrule
\end{tabular}
\end{table}

Three readings of Table~\ref{tab:tiers} matter. First, the same-chunk tier
dominates the written edges at $66.1\%$ and sits at a median weight of exactly
$1.000$, which is by construction: two entities in one chunk enter as a self-pair
at $\cos = 1$ and no amount of corpus can outweigh that. Anyone reading a derived
edge's weight should know that a weight of $1$ means co-mention and nothing more.
Second, and against that, a third of the written layer --- $23{,}400$ edges --- is
\emph{cross-document}, at a median weight of $0.749$. Those are the ties no single
document states, and they exist in this quantity only because the corpus is many
publishers writing about a shared subject. Third, the un-gated accumulator table
is $1{,}517{,}795$ rows, or $6.8\%$ of all node pairs, of which $93{,}322$ sit
below the gate. That is the evolving part: it is not a cache and not a to-do
list, it is a set of assertions the corpus currently does not support strongly
enough to write down, held in a form where the next document can change that
verdict arithmetically.

One property does not survive contact with a corpus, and it is the per-endpoint
cap. \code{top\_n} $= 10$ bounds how many neighbours an endpoint keeps
\emph{per run}, not per graph, so a hub accumulates a fresh top ten on every
document that mentions it. Over $246$ documents the median derived degree is
$11$, the ninetieth percentile is $36$, and the maximum is $742$; $3{,}689$ of
$6{,}668$ endpoints ($55.3\%$) exceed the nominal cap. This is a real design
consequence rather than a bug --- a per-graph cap would require reconsidering an
endpoint's whole neighbourhood on every write, which is exactly the recomputation
the incremental design exists to avoid --- but a reader should not take
\code{top\_n} as a degree bound. The weight distribution is otherwise as the
arithmetic requires: minimum $0.707$, tenth percentile $0.741$, median $1.000$,
and \emph{zero} edges below the floor of $1 - \sqrt{\beta/2} = 0.582$ established
in Section~\ref{sec:idw}. Support --- the number of chunk pairs behind an edge
--- has median $1$, ninetieth percentile $3$ and maximum $119$.

\subsection{Tightening \texorpdfstring{$\theta$}{theta} thins the instances, not the ontology}
\label{subsec:sweep}

Figure~\ref{fig:ontology-embedded} raises an obvious question: the derived layer
looks saturated at type level, so would a stricter threshold thin it back to
something legible? Table~\ref{tab:sweep} answers it, and the answer is no --- for
a reason worth stating, because it bounds what $\theta$ is good for.

The sweep costs nothing to run, which is itself a consequence of the design. The
accumulators are persisted un-gated (Section~\ref{sec:incremental}) and a written
edge carries its weight as a property, so re-applying $\theta$ at read time needs
neither the corpus nor the embedder. Two populations are swept: the $69{,}867$
written edges, which can only be tightened because they were already gated at
$\theta = 0.7$; and the $1{,}517{,}795$ accumulator rows, which are every pair the
pass has ever had evidence for and so sweep from the floor up. Each is counted at
both levels --- instance edges, and the directed structural-label arrows an
ontology diagram draws.

\begin{table}[!htb]
\centering
\caption{The derived layer against the weight threshold, re-applied at read time.
``Arrows'' is distinct \code{src}\,$\to$\,\code{dst} structural-type pairs --- an
edge of Figure~\ref{fig:ontology-embedded}, drawn at corpus scale. Both
populations lose instances fast and arrows slowly. Below $\theta = 0.70$ the
written column cannot move, since that is the gate the run applied; the
accumulator column can, and shows the latent layer underneath it.}
\label{tab:sweep}
\begin{tabular}{lrrrr}
\toprule
& \multicolumn{2}{c}{Written edges} & \multicolumn{2}{c}{Un-gated accumulators} \\
\cmidrule(lr){2-3}\cmidrule(lr){4-5}
$\theta$ & Instances & Arrows & Pairs & Arrows \\
\midrule
$0.582$ & $69{,}867$ & $347$ & $1{,}517{,}795$ & $464$ \\
$0.70$  & $69{,}867$ & $347$ & $1{,}424{,}473$ & $464$ \\
$0.75$  & $57{,}540$ & $341$ & $394{,}668$     & $415$ \\
$0.80$  & $46{,}928$ & $319$ & $159{,}969$     & $336$ \\
$0.90$  & $46{,}687$ & $317$ & $154{,}505$     & $323$ \\
$1.00$  & $32{,}817$ & $285$ & $130{,}942$     & $294$ \\
\bottomrule
\end{tabular}
\end{table}

Moving $\theta$ from $0.70$ to $0.80$ discards $88.8\%$ of the candidate pairs and
$27.6\%$ of the schema arrows. The gap is structural rather than a property of
this corpus: an arrow exists if \emph{any} instance under it survives, so the type
level is a disjunction over thousands of instances, and a disjunction is very hard
to switch off --- each arrow only ever needed one. $\theta$ is therefore a strong
instance-level control and a weak ontology-level one. Anyone hoping to declutter a
schema diagram by raising it will find they have deleted most of the graph and
almost none of the picture; the lever that works on the picture is drawing arrows
by strength, which is what Figure~\ref{fig:ontology-embedded} gives up by being a
schema view.

The flat stretch from $0.80$ to $0.95$ is not a coarse grid. The weight
distribution is bimodal, and sharply so: of $1.5$ million pairs, only $5{,}513$
--- $0.36\%$ --- have a weight anywhere in $[0.80, 0.999)$. The $23{,}514$ more
between $0.999$ and $1$ are co-mention carrying a floating-point residue rather
than a third mode. Evidence is either same-chunk co-mention pinned at $1.0$ or
cross-document material in a band just above the $0.582$ floor of
Section~\ref{sec:idw}, with next to nothing in between, so there is no population
for a threshold in that range to act on. This is the corpus-scale form of what
Figure~\ref{fig:ontology-embedded} shows at five documents, where the second mode
does not exist yet and the sweep is flat all the way to $1$.

\begin{figure}[p]
\centering
\begin{subfigure}{0.86\textwidth}
  \centering
  \includegraphics[width=\textwidth]{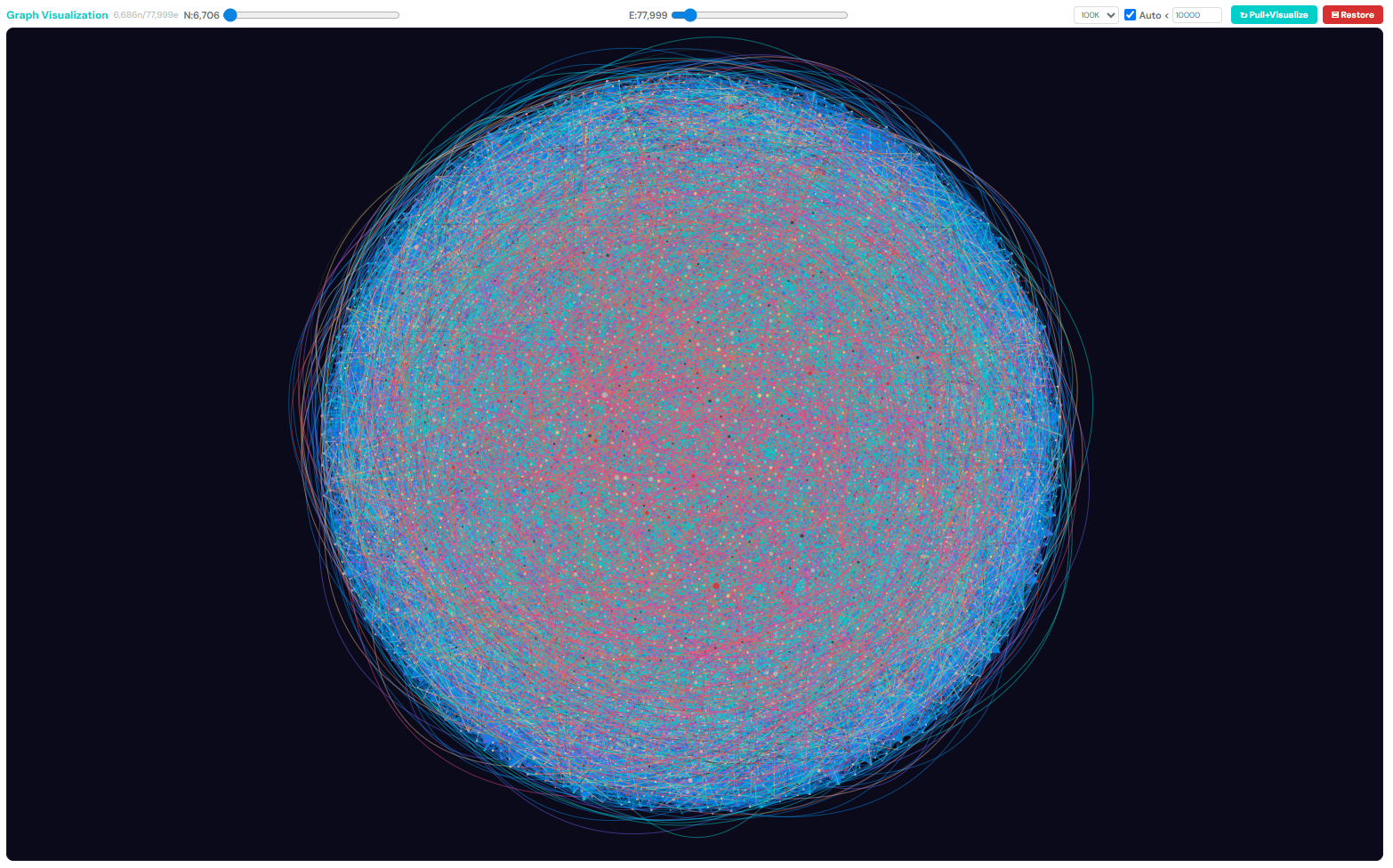}
  \caption{Everything: $77{,}999$ edges over the $6{,}686$ nodes that carry one.}
  \label{fig:canvas-all}
\end{subfigure}

\vspace{4pt}

\begin{subfigure}{0.86\textwidth}
  \centering
  \includegraphics[width=\textwidth]{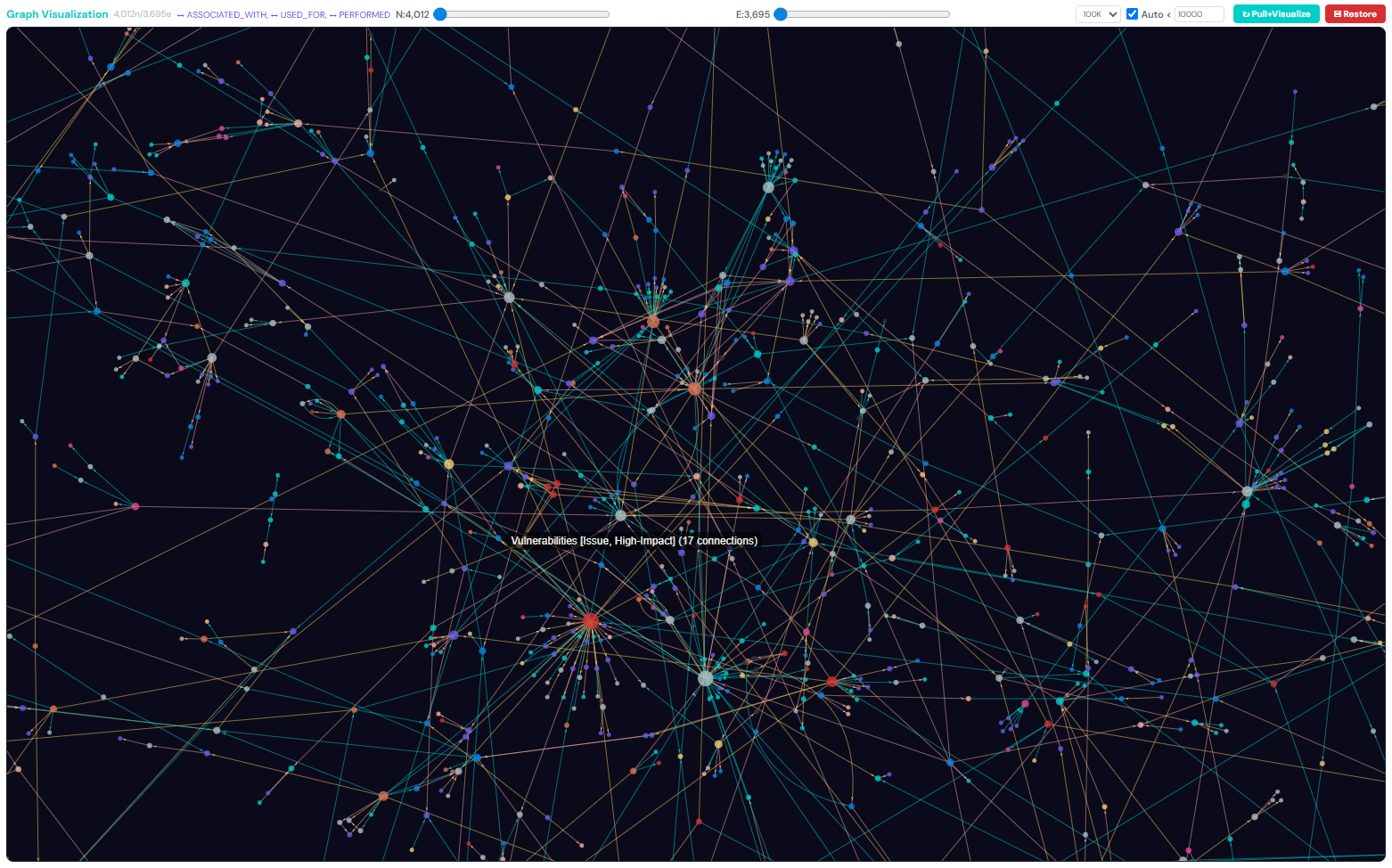}
  \caption{\code{EMBEDDED} deselected: the three largest extracted relation types
  --- \code{ASSOCIATED\_WITH} ($1{,}653$), \code{USED\_FOR} ($1{,}068$) and
  \code{PERFORMED} ($974$) --- $3{,}695$ edges over $4{,}012$ nodes.}
  \label{fig:canvas-extracted}
\end{subfigure}
\caption{The same graph with and without its derived layer, drawn by a force
simulation in the browser. \subref{fig:canvas-all} is what $89.6\%$ derived
\emph{looks} like: an undifferentiated disc in which the layout has nothing left
to separate, because almost every pair of entities sharing a neighbourhood now
shares an edge. It is not a pretty picture of a graph, and that is the finding.
\subref{fig:canvas-extracted} is the extracted graph underneath, reached by
deselecting one edge label in the workbench's Edges-by-Label panel: hub-and-spoke
stars around frequently-named actors and technologies, joined by long
\code{ASSOCIATED\_WITH} spans, with a fringe of small components that are
individual documents the corpus has not yet tied to anything. The hovered node is
\code{Vulnerabilities}, and it gives this subsection's argument at the scale of
one node --- the tooltip's ``$17$ connections'' is its degree \emph{within the
current selection}, its full extracted degree is $47$, and it carries $302$
\code{EMBEDDED} edges, $86.5\%$ derived.}
\label{fig:canvas}
\end{figure}

Figure~\ref{fig:canvas} draws the same layer rather than tabulating it, and it
makes visible one thing a table cannot: the derived edges are what connect the
graph, and they connect it too well. Extraction produces a document-shaped
scatter, because nothing in a single feed summary relates its entities to any
other document's --- panel~\subref{fig:canvas-extracted} is that scatter, still
legible, with per-document components one can count by eye. Adding the derived
layer pulls all of it into the single body of panel~\subref{fig:canvas-all}. That
is the result the pass was built to produce and simultaneously the reason it
needs a gate. Whether the object is \emph{useful} at this density is a separate
question, and Section~\ref{sec:limitations} takes it up. The immediate practical
point is smaller: the derived layer must be a \emph{toggleable} layer in whatever
tool inspects the graph, because a single view of both is a view of neither.

\subsection{Seeing the neighbourhood structure: the corpus on $S^2$}
\label{subsec:sphere}

Everything in Section~\ref{sec:idw} is arithmetic on cosines, which is hard to
disbelieve and equally hard to believe. This subsection turns it into a picture
--- and, because a picture of clustering proves nothing on its own, into the two
numbers that say whether the picture is entitled to its shape.

The vectors are $L^2$-normalised at write time, so they already lie on $S^{239}$.
Flattening them is not a change of geometry but a choice of how many coordinates
of an orthonormal frame to keep: two give a circle, three give a sphere. The
frame that preserves the most squared distance is PCA, and for unit vectors that
is not one option among several. The matrix of pairwise cosines \emph{is} the
Gram matrix, and classical multidimensional scaling on a Gram matrix is
algebraically identical to PCA on the vectors. The vectors are centred first:
sentence embeddings are anisotropic and every one carries a large shared
component, so the uncentred first component is ``is this text at all'' --- true
of every point and separating none.

\begin{figure}[t]
\centering
\includegraphics[width=\textwidth]{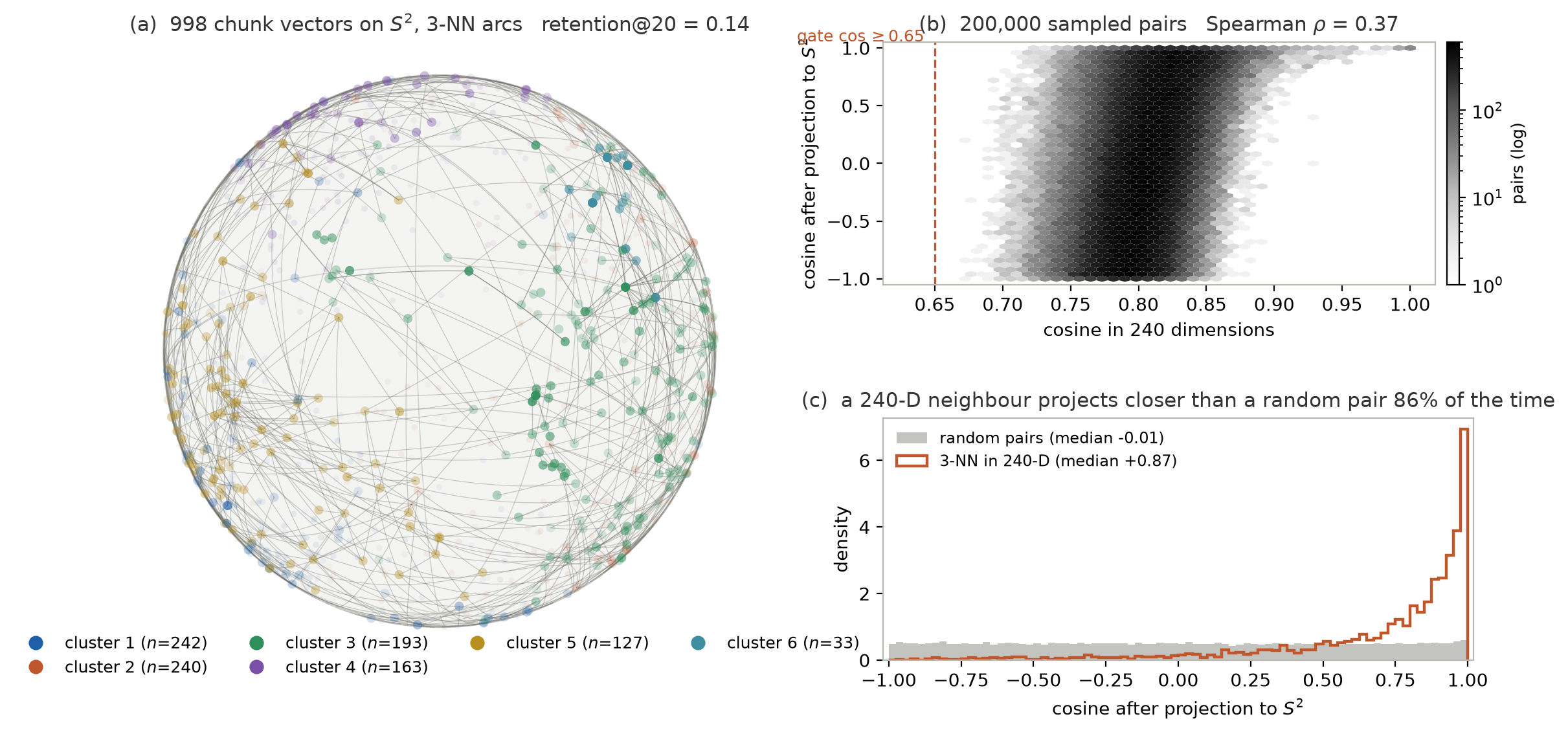}
\caption{The corpus embedded on $S^2$: $998$ chunk vectors from $246$ documents.
\textbf{(a)} chunk vectors projected onto
the three leading principal components, coloured by a spherical $k$-means
computed in the \emph{full} $240$ dimensions --- so the colours are a fact about
the embedding, and the picture either shows them holding together or it does
not. Each grey arc joins a chunk to one of its three nearest neighbours in
$240$ dimensions; short arcs are neighbourhoods the projection kept, arcs across
the sphere are neighbourhoods it lost. Point opacity is $r_i$, the fraction of
that point's own deviation from the corpus mean that survived the projection, so
a washed-out point is one the figure is least entitled to make a claim about.
\textbf{(b)} projected cosine against true $240$-dimensional cosine.
\textbf{(c)} the claim as a distribution rather than a shape: where
$240$-dimensional nearest-neighbour pairs land after projection, against random
pairs.}
\label{fig:sphere}
\end{figure}

Two coordinates fall out of the projection for free and both are reported rather
than hidden. The first is $r_i = \lVert P(x_i-\mu)\rVert / \lVert x_i-\mu\rVert$,
how much of point $i$'s deviation survived; it is drawn as opacity. The second
is the only thing a $k$-NN figure can honestly claim,
\[
  \mathrm{retention}@k \;=\; \frac{1}{n}\sum_i
    \frac{\lvert \mathrm{topk}_{240}(i) \cap \mathrm{topk}_{\mathrm{proj}}(i)\rvert}{k},
\]
the share of each point's true $k$ nearest neighbours that are still among its
$k$ nearest after projection. It belongs in the caption of any such figure, and
its absence elsewhere is usually the tell.

Those numbers also settle the choice of a sphere over the circle the idea
naturally suggests. $S^1$ has one degree of freedom: $n$ points on it are $n$
angles on a line that wraps, so once $n$ reaches the thousands, neighbouring
points sit a fraction of a degree apart and everything looks clustered because
there is nowhere else to be. A picture that cannot fail is not evidence. On the
$998$ chunks of this run the third coordinate is worth $\mathrm{retention}@20$ of
$0.142$ against $0.080$ for the circle, and $\mathrm{retention}@50$ of $0.223$
against $0.145$ --- $1.8\times$ and $1.5\times$ for one extra number per point.

Panel (c) is the result, and it is worth stating in the order that makes it
falsifiable. The projection is \emph{bad} at what a projection is usually asked
to do: three axes keep $11.0\%$ of the variance, the median point retains
$0.304$ of its own deviation from the corpus mean, only $14.2\%$ of a point's
true $20$ nearest neighbours are still among its nearest $20$ afterwards, and
the rank correlation with the true cosine is a weak $\rho = 0.371$. Every one of
those numbers got worse as the corpus grew --- at $56$ chunks the same
projection kept $21.5\%$ of the variance at $\rho = 0.478$ --- which is what
crowding $240$ dimensions into three must do and is exactly why the figure
reports them.

And yet the separation survives all of it. A pair that are nearest neighbours in
$240$ dimensions project to a median cosine of $+0.865$, against $-0.007$ for a
random pair, and a neighbour lands closer than a random pair $85.6\%$ of the
time --- at $998$ chunks, essentially unchanged from $85.3\%$ at $56$. That gap,
not the shape of panel (a), is what ``similar documents cluster spatially''
means operationally: the projection has lost which neighbour is which while
keeping that neighbours are near. It is also the cleanest argument for why the
pass queries the full $240$ dimensions and never a projection --- $11\%$ of the
variance is enough to \emph{see} the neighbourhood structure and nowhere near
enough to \emph{compute} on it.

Two things are deliberately not drawn. The admission gate $\beta = 0.35$ is an
angle of $49.46^{\circ}$ in $240$ dimensions; drawing it as a cap on the
projected sphere would assert that the projection preserves angles, which it
does not. Nor are the gate-admitted pairs drawn, which was the first attempt: on
this corpus the gate admits $497{,}499$ of $497{,}503$ chunk pairs --- all but
four --- because every document is about the same subject, so ``admitted'' is
very nearly the complete graph and an arc for each says nothing. That is itself a result --- the gate is a cheap
pre-filter against unrelated text, not a discriminator within a topical corpus,
and the discrimination is done downstream by $\mathit{min\_weight}$.

%% file: sections/10-limitations.tex
\section{Limitations}
\label{sec:limitations}

Several of these are the direct cost of a decision defended earlier, and a reader
deciding whether to adopt the method needs both halves.

\paragraph{No precision or recall.} The central omission. We do not know what
fraction of the \EMB{} edges a human annotator would endorse, nor what fraction
of the endorsable relationships the pass recovers. Both need a corpus with hidden
relationships labelled by someone who read it, and no such corpus exists here.
The agreement figures of Table~\ref{tab:dims} compare two configurations of one
method; reading them as accuracy would be a misuse. The obvious study ---
annotate a few hundred candidate pairs, blind, and report precision at several
thresholds --- is the natural next step and is not attempted here.

\paragraph{The value term is analysed, not validated.} Section~\ref{sec:idw}
shows that the affine term is unusable behind a gate and that the $p=1$ weight
flattens the dynamic range. Both are arguments from the shape of the formulas,
supported by Table~\ref{tab:spread}. Neither shows that the chord term produces
\emph{better edges} than some third alternative; that comparison needs the ground
truth of the previous paragraph.

\paragraph{On this corpus the gate and the threshold barely bind.} Over all
$509{,}545$ chunk pairs the cosine runs from $0.630$ to $1.000$, with a median of
$0.805$ and an interquartile range only $0.038$ wide. The gate
$\cos \geq 1-\beta = 0.65$ therefore admits $100.00\%$ of them, and downstream
$\theta = 0.7$ passes $1{,}390{,}600$ of $1{,}415{,}574$ node pairs, or $98.2\%$.
The per-endpoint cap does essentially all of the selecting. The analysis is not
invalidated --- an embedder that concentrates distances more tightly than the
gate makes the collapse of Proposition~\ref{prop:affine} worse, not better --- but
on a single-domain corpus at $240$ dimensions the two parameters a user is offered
are not the ones controlling the result. Whether that is a property of the domain,
the embedder, or truncation to $240$ dimensions we have not separated; the corpus
has no $768$-dimensional counterpart to compare against.

\paragraph{$\varepsilon$ is a decision, not a guard, and it is untuned.} At
$\varepsilon = 10^{-6}$ a same-chunk co-occurrence outweighs a neighbour at
$\cos = 0.9$ by about $10^5$. Section~\ref{sec:idw} argues that same-chunk
evidence \emph{should} dominate, and the ordering is surely right, but the
magnitude was inherited from a numerical-guard idiom rather than chosen. A corpus
of long paragraphs, where sharing one is weak evidence, would want a much larger
$\varepsilon$, and we have not characterised where that crossover lies.

\paragraph{Changing the value term invalidates stored accumulators, silently.}
Accumulators written under one $v$ cannot be mixed with those written under
another, and nothing detects the mixture (Section~\ref{sec:incremental}). The
remedy in use is procedural --- re-extract into a fresh graph --- which is not a
remedy.

\paragraph{The per-endpoint cap does not bind at corpus scale.} It is a per-run
bound, so a node accumulates a fresh top $n$ from every document that mentions
it. Section~\ref{subsec:derived} quantifies the result: with $n = 10$ over $246$
documents, $55.3\%$ of endpoints finish above the nominal cap and the largest
derived degree is $742$. The trade is deliberate, but $n$ cannot be read as a
degree bound in any useful sense.

\paragraph{The derived layer dominates the graph, and the query side is not yet
adapted to that.} At the end of the corpus run $89.6\%$ of all edges are \EMB{}.
That the ratio is expected does not dispose of it: a traversal that ignores
weights will find a derived path between almost any two nodes, and will find it
more easily than the extracted path that actually states something. Keeping
\EMB{} out of the natural-language answer path (Section~\ref{sec:impl}) protects
that one consumer and leaves every other to discover the problem for itself. A
weight-aware traversal primitive, or a materialised view at a higher threshold,
is the missing piece.

\paragraph{Facet names are not capped, so the node-label count does not
converge.} Canonical \emph{types} saturate at $24$ over $246$ documents, but the
graph ends with $101$ distinct node labels, because a node carries its facets as
labels too and the cap counts canonicals per kind rather than labels per node
(Section~\ref{subsec:vocab}). Some of the excess are not facets of anything ---
\code{Correct}, \code{Different}, \code{Frustrating} --- but adjectives lifted out
of prose. This is upstream of the similarity pass. It matters here because
``the ontology converges'' has been shown of the structural vocabulary only.

\paragraph{Exact $k$-NN is linear in corpus size per document.} The search side
of the join is the whole vector table (Algorithm~\ref{alg:master}, step~8). At
today's scale the constant is small and exactness is worth more than the saving,
but the reasoning has a horizon, and Section~\ref{subsec:cost} measures where it
falls: not per document, where the scan stays negligible against extraction, but
in the cumulative ingest, around ten million chunks. An HNSW
index~\citep{malkov2020,douze2024} is built past a corpus-size threshold and
takes over there, with candidates rescored on the exact metric so the written
weights are never approximate. The residual limitation is narrower than it was:
what remains approximate is only \emph{which} pairs are offered for scoring, and
that was measured at $99.6$--$100\%$ edge-level agreement rather than assumed.
The trade-off has not been eliminated, only bounded on one corpus.

\paragraph{Warehouse-backed graphs are excluded.} Where a label is part of the
schema, there is no additive element write and the link stage cannot run
(Section~\ref{sec:impl}). The pass declines by name rather than failing, but the
exclusion is real: there the layer would have to be materialised as its own
relation behind a view cascade, which is a different design.

\paragraph{Whole-document extraction is incompatible.} With one chunk per
document every entity co-occurs with every other at $\cos = 1$ and the pass emits
a clique. Forcing chunked extraction resolves the failure and also means the two
modes cannot be combined.

\paragraph{Edges are never retracted.} The pass has no delete path, so an \EMB{}
edge whose weight later falls below $\theta$ --- possible, since the weight is a
mean --- stays in the graph. Raising $\theta$ after the fact is a different
operation and is available: every written edge carries its weight, so tightening
the threshold on an existing graph is one \code{DELETE} predicated on
\code{r.weight} rather than a re-run. On the corpus graph that is worth doing,
but it is an operator action outside the pass.

\paragraph{Chunking is paragraph-based and capped.} A document with more
paragraphs than the cap is truncated. The truncation is reported rather than
silent, but the tail is not embedded and its entities acquire no similarity
evidence.

\paragraph{One embedding provider.} The pass supports a single embedder family
and refuses to run when the configured chat provider is not the matching one,
rather than handing another provider's credentials to the embedding client. That
is the right failure, but it ties the layer's availability to a provider choice
made elsewhere.

\paragraph{Single-language and single-modality.} Everything here assumes text
paragraphs in a language the embedder handles well. Cross-lingual behaviour is
untested, and non-text content contributes nothing.

%% file: sections/11-conclusion.tex
\section{Conclusion}
\label{sec:conclusion}

An extractor that only asserts what a sentence says will build a graph that is
correct and disconnected. The pass described here recovers the ties the corpus
implies without weakening that guarantee. It keeps the derived layer strictly
separate: one edge label carrying a weight and a support count, written through
the seam extraction already uses, with no path by which it can alter or remove
anything extraction found.

Two things in the design turned out to matter more than expected.

The first is that transplanting inverse-distance weighting into a cosine space is
not merely a change of metric. On $L_2$-normalised vectors the weight
$1/(\varepsilon + 1 - \cos)$ is exactly Shepard $p=2$, so the exponent is fixed
rather than free. But the value term is a function of the same distance that
produces the weight, which has no counterpart in the spatial setting, and that is
where the design goes wrong if it is going to. Behind a gate admitting
$\cos \geq 1-\beta$, the natural affine value term is confined to a band of width
$\beta/2$ and renders any write threshold inoperative. The rescaled chord
distance does not collapse, and it comes with an explicit constraint,
$\theta > 1 - \sqrt{\beta/2}$, tying the threshold to the gate. Neither the
failure nor the constraint is visible from the formulas until the gate is taken
into account, and the failure is silent: a threshold that filters nothing looks
exactly like one set generously.

The second is that persisting the accumulators \emph{before} thresholding buys
almost everything one wants from an incremental system, and costs one
representational decision rather than any machinery. The accumulator triple is a
commutative monoid, so arrival order cannot affect the final weights, nothing has
to be recomputed as the corpus grows, and a pair whose evidence is currently too
thin keeps accumulating until a later document tips it over the threshold. The
layer evolves with the corpus because addition is associative --- not because
anything watches for changes.

What is missing is the evaluation. We can say what the method computes, why the
formulation is the one it is, and what it costs; we cannot yet say how often the
edges it produces are ones a reader would endorse. Building a labelled set of
hidden relationships and reporting precision across thresholds is the necessary
next work, and it would also make the open parameters --- $\varepsilon$ above all,
which sets how much a shared paragraph outweighs a similar one and is currently
inherited rather than chosen --- tunable against something other than judgement.
Beyond that, the two clear extensions are an approximate neighbour index, which
the exact self-join will require once the corpus reaches the order of millions of
chunks, and the combination of this text-provenance signal with a structural
link-prediction signal, which draws on entirely different evidence and should be
complementary rather than redundant.